\newtheorem{theorem}{Theorem}[section]
\newtheorem{lemma}[theorem]{Lemma}
\newtheorem{proposition}[theorem]{Proposition}
\theoremstyle{definition}
\newtheorem{definition}[theorem]{Definition}
\theoremstyle{remark}
\newtheorem{remark}[theorem]{Remark}
\numberwithin{equation}{section}
\newcommand{\R}{{\mathbb R}}
\newcommand{\C}{{\mathbb C}}
\newcommand{\barh}{{\bar h}}
\newcommand{\cA}{{\mathcal A}}
\newcommand{\cG}{{\mathcal G}}
\newcommand{\cM}{{\mathcal M}}
\newcommand{\cR}{{\mathcal R}}
\newcommand{\cE}{{\mathcal E}}
\newcommand{\bM}{{\mathbf M}}
\newcommand{\lAm}{{_l\cA^m}}
\newcommand{\rAm}{{\cA_r^m}}
\newcommand{\g}{{\mathbf g}}
\begin{document}
\title[Noncommutative spacetime]
{A noncommutative geometric approach to \\ the quantum structure of spacetime}

\author{R.B. Zhang}
\address{School of Mathematics and Statistics,
University of Sydney, Sydney, Australia}
\email{rzhang@sydney.edu.au}

\author{Xiao Zhang}
\address{Institute of Mathematics, Academy of Mathematics and Systems Science,
Chinese Academy of Sciences, Beijing, China}
\email{xzhang@amss.ac.cn}

\begin{abstract}
Together with collaborators, we introduced a noncommutative
Riemannian geometry over Moyal algebras and systematically developed
it for noncommutative spaces embedded in higher dimensions in the
last few years. The theory was applied to construct a noncommutative
version of general relativity, which is expected to capture some
essential structural features of spacetime at the Planck scale.
Examples of noncommutative spacetimes were investigated in detail.
These include quantisations of plane-fronted gravitational waves,
quantum Schwarzschild spacetime and Schwarzschild-de Sitter
spacetime, and a quantun Tolman spacetime which is relevant to
gravitational collapse.  Here we briefly review the theory and its
application in the study of quantum structure of spacetime.
\end{abstract}
\keywords{Noncommutative geometry, quantum spacetime, noncommutative
general relativity, noncommutative Einstein field equations, Moyal
algebra} \maketitle

\tableofcontents
\section{Introduction}\label{sect:introd}
\subsection{Quantum spacetime and noncommutative geometry}

It is a common consensus in the physics community that at the Planck
scale ($1.6 \times 10^{-33}$ cm), quantum gravitational effects
become dominant, and the usual notion of spacetime as a pseudo
Riemannian manifold becomes obsolete.  In the 40s, great masters
like Heisenberg, Yang and others already pondered about the
possibility that spacetime might become noncommutative \cite{Sn, Y}
at the Planck scale, and a rather convincing argument in support of
this was given by Doplicher, Fredenhagen and Roberts \cite{DFR} in
the middle of 90s. The essence of their argument is as follows. When
one localizes spacetime events with extreme precision, gravitational
collapse will occur. Therefore, possible accuracy of localization of
spacetime events should be limited in a quantum theory incorporating
gravitation. This implies uncertainty relations for the different
coordinates of spacetime events similar to Heisenberg's uncertainty
principle in quantum mechanics. These authors then proposed a model
of quantum spacetime based on a generalisation of the Moyal algebra,
where the commutation relations among coordinates implement the
uncertainty relations.

The argument of \cite{DFR} is consistent with the current
understanding of gravitational physics. It inevitably leads to the
conclusion that some form of {\em noncommutative geometry} \cite{Co}
will be necessary in order to describe the structure of spacetime at
the Planck scale.

Early work of Yang \cite{Y} and of Snyder \cite{Sn} already
contained genesis of geometries which were not noncommutative, but
noncommutative geometry became a mathematical subject in
its own right only in the middle of 80s
following Connes' work. Since then there has been much
progress both in developing theories and exploring their
applications. Many viewpoints were adopted and different
mathematical approaches were followed by different researchers.
Connes' theory \cite{Co} (see also \cite{GVF}) formulated within the
framework of $C^*$-algebras is the most successful, which
incorporates cyclic cohomology and K-theory, and gives rise to
noncommutative versions of index theorems. A central notion in the
current formulation of Connes' theory is that of a spectral triple
involving an operator generalising the Dirac operator on a spin
manifold. It is known that one can recover spin manifolds from the
framework of spectral triples.

Theories generalizing aspects of algebraic geometry were also
developed (see, e.g., \cite{SV} for a review and references). A
notion of noncommutative schemes was formulated, which seems to
provide a useful framework for developing noncommutative algebraic
geometry. The seminal work of Kontsevich (see \cite{Ko}) on
deformation quantization \cite{BFFLS} of Poisson manifolds and
further developments \cite{KS} along a similar line are another
aspect of the subject of noncommutative geometry.

\subsection{Applications of noncommutative geometry in quantum physics}

Noncommutative geometry has been applied to several areas in quantum physics.
Connes and Lott \cite{CL} obtained the
classical Lagrangian of the standard model by ``dimensional
reduction" from a noncommutative space to the four dimensional
Minkowski space. Latter Chamseddine and Connes \cite{CC96} used the
more sophisticated method of spectral triples to incorporate
gravity, which introduces additional interaction terms to the usual
Lagrangian of standard model coupled to gravity. See \cite{CC10} for
recent developments.

Seiberg and Witten \cite{SW} showed that the anti-symmetric tensor
field arising from massless states of strings can be described by
the noncommutativity of a spacetime, where the algebra of functions
is governed by the Moyal product. A considerable amount of research
followed \cite{SW}, see the review articles \cite{DN, Sz} and
references therein.

Noncommutative quantum field theoretical models were obtained by
replacing the usual product of classical fields by the Moyal product
(see \cite{Sz} for a review and and references). Such theories were
shown to have the curious property of mixing infrared and
ultraviolet divergences in Feynman diagrams \cite{MRS}. A solid
result \cite{HW} in the area seems to be the proof of
renormalisability of a noncommutative version of the $\phi^4$ theory
in $4$-dimensions (which has an external harmonic oscillator
potential term thus manifestly breaks translational invariance).
However, many fundamental issues remain murky. For example, it was
suggested that the Poincar\'e invariance of relativistic quantum
field theory was deformed to a twisted Poincar\'e invariance
\cite{CKNT, CPT} in noncommutative field theory, but it appears
rather impossible to define actions of the twisted Poincar\'e
algebra on noncommutative fields \cite{CKTZZ}. There was even the
surprising claim that noncommutative quantum field theories were
completely equivalent to their undeformed counter parts \cite{FW}.

An important application of noncommutative geometry is in the study
of noncommutative generalisations of Einstein's theory of general
relativity. A consistent formulation of a noncommutative version of
general relativity is expected to give insight into a gravitational
theory compatible with quantum mechanics. A unification of general
relativity with quantum mechanics has long been sought after but
remains as elusive as ever despite the extraordinary efforts put
into string theory for the last three decades. The noncommutative
geometrical approach may provide an alternative route. There have
been intensive research activities in this general direction
inspired by \cite{DFR}. We refer to \cite{MH} for a brief review.
More references can also be found in \cite{Sz}.

\subsection{A noncommutative geometric approach to quantum
spacetime}\label{sect:CTWZZ}

In \cite{CTZZ, ZZ}, a noncommutative Riemannian geometry over Moyal
algebras was developed, which retains key notions such as metric,
connection and curvature of usual Riemannian geometry. The theory
was applied to develop a noncommutative theory of general
relativity, which is expected to capture essential features of
quantum gravity.

We first quantised a space by deforming \cite{Ge, Ko} the algebra of
functions to a noncommutative associative algebra, the Moyal
algebra. Such an algebra naturally incorporates the generalised
spacetime uncertainty relations of \cite{DFR}, capturing key
features expected of spacetime at the Planck scale. We then
systematically investigated the noncommutative Riemannian geometry
of noncommutative spaces embedded in ``higher dimensions". The
general theory was extracted from the noncommutative Riemannian
geometry of embedded spaces.

The theory of \cite{CTZZ, ZZ} was first developed within a geometric
framework analogous to the classical theory of embedded surfaces
(see, e.g., \cite{doC}). This has the advantage of being concrete
and explicit. Many examples of such noncommutative geometries can be
easily constructed, which are transparently consistent in contrast.
We then reformulated the theory algebraically \cite{ZZ} in terms of
projective modules, a language commonly adopted in noncommutative
geometry \cite{Co, GVF}. Morally a projective module (a direct
summand of a free module) is the geometric equivalent of an
embedding of a low dimensional manifold isometrically in a higher
dimensional one.

This connection between the algebraic notion of projective modules
and the geometric notion of embeddings is particularly significant
in view of Nash's isometric embedding theorem \cite{N} and its
generalisation to pseudo-Riemannian manifolds \cite{Fr, C, Gr}. The
theorems state that any (pseudo-) Riemannian manifold can be
isometrically embedded in flat spaces. Therefore, in order to study
the geometry of spacetime, one only needs to investigate (pseudo-)
Riemannian manifolds embedded in higher dimensions. It is reasonable
to anticipate something similar in the noncommutative setting.
We should mention that embedded noncommutative spaces also
play a role in the study of branes embedded in $\R^D$ in the context
of Yang-Mills matrix models \cite{St}.

The theory of \cite{CTZZ, ZZ} was applied to construct a
noncommutative analogue of general relativity. In particular, a
noncommutative Einstein field equation was proposed based on
analysis of the noncommutative analogue of the second Bianchi
identity. A new feature of the noncommutative field equation is the
presence of another noncommutative Ricci curvature tensor (see
\eqref{Einstein}). The highly nonlinear nature of the equation makes
it difficult to study, and noncommutativity adds
further complication. Despite this, we have succeeded in
constructing a class of exact solutions of the noncommutative
Einstein equation, which are quantum deformations of plane-fronted
gravitational waves \cite{Br, ER, Ro, EK}.

Quantisations of the Schwarzschild spacetime and Schwarzschild-de
Sitter spacetime were also investigated within the framework of
\cite{CTZZ, ZZ}. They solve the noncommutative Einstein equation in
the vacuum to the first order in the deformation parameter. However,
higher order terms appear, which may be interpreted as matter
sources. The physical origin and implications of the source terms
are issues of physical interest. The Hawking temperature and entropy
of the quantum Schwarzschild black hole were investigated, and a
quantum correction to the entropy-area law was observed.

Gravitational collapse was also studied in the noncommutative
setting \cite{SWXZZ}. Classically one may follow \cite{W} to
investigate the behaviour of the scalar curvature of the Tolman
spacetime. When time increases to a certain critical value, the
scalar curvature goes to infinity, thus the radius of the stellar
object reduces to zero. By the reasoning of \cite{W}, this indicates
gravitational collapse. [Obviously this only provides a snapshot of
the evolution of the star, nevertheless, it enables one to gain some
understanding of gravitational collapse.] A similar analysis in the
noncommutative setting showed that gravitational collapse happens
within a certain range of time instead of a single critical value,
because of the quantum effects captured by the non-commutativity of
spacetime. However, noncommutativity effect enters only at third
order of the deformation parameter.

\subsection{The present paper}

This paper is a mathematical review of the theory developed in
\cite{CTZZ, ZZ} and its application to noncommutative gravity
\cite{WZZ1, WZZ2, SWXZZ}. Its main body consists of two parts.
One part comprises of Sections
\ref{sect:Moyal}--\ref{sect:transformations}, which are all on the
general theory, except for Section \ref{sect:examples}, where some
elementary examples of noncommutative spaces are worked out in
detail to illustrate the general theory. The other part comprises of
Sections \ref{sect:Einstein}--\ref{sect:spacetime}. In Section
\ref{sect:Einstein}, we propose noncommutative
Einstein field equations, and construct exact solutions for them in the
vacuum. In section \ref{sect:spacetime}, we discuss the structure of
the quantum versions of the Schwarzschild spacetime,
Schwarzschild-de Sitter spacetime, and the Tolman spacetime which is
relevant for gravitational collapse.

We emphasize that the present paper is nothing more than a
streamlined account of the works \cite{CTZZ, ZZ, WZZ1, WZZ2, SWXZZ}.
It is certainly {\em not} meant to be a review of noncommutative
general relativity. There is a vast body of literature in this
subject area. The time and energy required to filter the literature
to write an in-depth critique are beyond our means. As a result, we
did not make any attempt to include all the references.

Here we merely comment that a variety of physically motivated
methods and techniques were used to study corrections to general
relativity arising from the noncommutativity of the Moyal algebra.
For example, references \cite{ADMW1, ADMW2} studied deformations of
the diffeomorphism algebra as a means for incorporating
noncommutative effects of spacetime. [It is unfortunate that
gravitational theories proposed this way \cite{ADMW1, ADMW2} were
different  \cite{AMV} from the low energy limit of string theory.]
In \cite{CTZ, COTZ} a gauge theoretical approached was taken. These
approaches differ considerably from the theory of \cite{CTZZ, WZZ1,
WZZ2} mathematically. Other types of noncommutative Riemannian
geometries were also proposed \cite{DMMM, DHLS, MM, M05}, which
retain some of the familiar geometric notions like metric and
curvature. Noncommutative analogues of the Hilbert-Einstein action
were also suggested \cite{C01, C04} by treating noncommutative
gravity as gauge theories. In the last couple of years, there have
also been many other papers on noncommutative black holes, see e.g.,
\cite{CTZ, ANSS, DKS, BNK, Kob}. We should also mention that Moyal planes
have been treated from the point of view of spectral triples in \cite{GBV}.

\medskip

\noindent{\bf Acknowledgement.} We thank Masud Chaichian, Anca
Tureanu, Ding Wang, Wen Sun and Naqing Xie for collaborations
at various stages of the project on noncommutative gravity.

\section{Differential geometry of noncommutative vector bundles}\label{sect:bundles}

In this section we investigate general aspects of the noncommutative
differential geometry over the Moyal algebra. We shall focus on the
abstract theory here. A large class of examples will be given in
later sections.

\subsection{Moyal algebra and projective modules}\label{sect:Moyal}
We recall the definition of the Moyal algebra of smooth
functions on an open region of $\R^n$, and also describe the
finitely generated projective modules over the Moyal algebra. This
provides the background material needed, and also serves to fix
notation.

We take an open region $U$ in $\R^n$ for a fixed $n$, and write the
coordinate of a point $t\in U$ as $(t^1, t^2, \dots, t^n)$. Let
$\barh$ be a real indeterminate, and denote by $\R[[\barh]]$ the
ring of formal power series in $\barh$. Let $\cA$ be the set of the
formal power series in $\barh$ with coefficients being real smooth
functions on $U$. Namely, every element of $\cA$ is of the form
$\sum_{i\ge 0} f_i\barh^i$ where $f_i$ are smooth functions on $U$.
Then $\cA$ is an $\R[[\barh]]$-module in the obvious way.

Fix a constant skew symmetric $n\times n$ matrix $\theta=(\theta_{i
j})$. The Moyal product on $\cA$ corresponding to $\theta$ is a map
\[ \mu: \cA \otimes_{\R[[\barh]]} \cA \longrightarrow \cA,
\quad f\otimes g \mapsto \mu(f, g),
\]
defined by
\begin{eqnarray}\label{multiplication}
\mu(f, g)(t)= \lim_{t'\rightarrow t} \exp^{\barh \sum_{i, j}
\theta_{i j}\frac{\partial}{\partial t^i}\frac{\partial}{\partial
t'^j}}f(t) g(t').
\end{eqnarray}
On the right hand side,  $f(t) g(t')$ means the usual product of the
functions $f$ and $g$ at $t$ and $t'$ respectively. Here
$\exp^{\barh \sum_{i j} \theta_{i j}\frac{\partial}{\partial
t^i}\frac{\partial}{\partial t'^j}}$ should be understood as a power
series in the differential operator $\sum_{i, j} \theta_{i
j}\frac{\partial}{\partial t^i}\frac{\partial}{\partial t'^j}$. We
extend $\mu$ $\R[[\barh]]$-linearly to all elements in $\cA$ by
letting
\begin{eqnarray*}
\mu(\sum f_i \barh^i, \sum g_j \barh^j) :=\sum  \mu(f_i, g_j)
\barh^{i+j}.
\end{eqnarray*}

It has been known since the late 40s from work of J. E. Moyal that
the Moyal product is associative. Thus the $\R[[\barh]]$-module
$\cA$ equipped with the Moyal product forms an associative algebra
over $\R[[\barh]]$, which is a deformation of the algebra of smooth
functions on $U$ in the sense of \cite{Ge}. We shall usually denote
this associative algebra by $\cA$, but when it is necessary to make
explicit the multiplication, we shall write it as $(\cA, \mu)$.

The partial derivations $\partial_i:=\frac{\partial}{\partial t^i}$
($i=1, 2, \dots, n$) with respect to the coordinates $t^i$ for $U$
are $\R[[\barh]]$-linear maps on $\cA$. Since $\theta$ is a constant
matrix, we have the Leibniz rule
\begin{eqnarray}\label{Leibniz}
\partial_i\mu(f, g)= \mu(\partial_i f, g) + \mu(f, \partial_i g)
\end{eqnarray}
for any elements $f$ and $g$ of $\cA$. Therefore, the $\partial_i$
are mutually commutating derivations of the Moyal algebra $(\cA,
\mu)$.

\begin{remark}
The usual notation in the literature for $\mu(f, g)$ is $f\ast g$,
which is also referred to as the star-product of $f$ and $g$.
Hereafter we shall replace $\mu$ by $\ast$ and simply write $\mu(f,
g)$ as $f\ast g$.
\end{remark}

Following the general philosophy of noncommutative geometry
\cite{Co}, we regard the noncommutative associative algebra $(\cA,
\mu)$ as defining some {\em quantum deformation of the region $U$}.
Finitely generated projective modules over $\cA$ will be regarded as
(spaces of sections of) noncommutative vector bundles on the quantum
deformation of $U$ defined by $\cA$. Let us now briefly describe
finitely generated projective $\cA$-modules.

Given an integer $m>n$, we let $\lAm$ (resp. $\rAm$) be the set of
$m$-tuples with entries in $\cA$ written as rows (resp. columns). We
shall regard $\lAm$ (resp. $\rAm$) as a left (resp. right)
$\cA$-module with the action defined by multiplication from the left
(resp. right). More explicitly, for $v=\begin{pmatrix}a_1 & a_2 &
\dots & a_m\end{pmatrix}\in\lAm$, and $b\in\cA$, we have $b \ast v =
\begin{pmatrix}b\ast a_1 & b\ast a_2 & \dots & b\ast a_m\end{pmatrix}$.
Similarly for $w=\begin{pmatrix}a_1 \\  a_2 \\ \vdots \\
a_m\end{pmatrix}\in\rAm$, we have $w\ast b = \begin{pmatrix}a_1\ast b \\
a_2\ast b\\ \vdots \\ a_m\ast b\end{pmatrix}$. Let $\bM_m(\cA)$ be
the set of $m\times m$-matrices with entries in $\cA$. We define
matrix multiplication in the usual way but by using the Moyal
product for products of matrix entries, and still denote the
corresponding matrix multiplication by $\ast$. Now for $A=(a_{i j})$
and $B=(b_{i j})$, we have $(A\ast B)=(c_{i j})$ with $c_{i j} =
\sum_{k} a_{i k}\ast b_{k j}$. Then $\bM_m(\cA)$ is an
$\R[[\barh]]$-algebra, which has a natural left (resp. right) action
on $\rAm$ (resp. $\lAm$).

A finitely generated projective left (reps. right) $\cA$-module is
isomorphic to some direct summand of $\lAm$ (resp. $\rAm$) for some
$m<\infty$. If $e\in \bM_m(\cA)$ satisfies the condition $e\ast
e=e$, that is, it is an idempotent, then
\[
\cM=\lAm\ast e := \{v\ast e \mid v\in \lAm \}, \quad
\tilde\cM=e\ast\rAm := \{e\ast w \mid \in \rAm \}
\]
are respectively projective left and right $\cA$-modules.
Furthermore, every projective left (right) $\cA$-module is
isomorphic to an $\cM$ (resp. $\tilde\cM$) constructed this way by
using some idempotent $e$.

\subsection{Connections and curvatures}\label{connections}

We start by considering the action of the partial derivations
$\partial_i$ on $\cM$ and $\tilde\cM$. We only treat the left module
in detail, and present the pertinent results for the right module at
the end, since the two cases are similar.

Let us first specify that $\partial_i$ acts on rectangular matrices
with entries in $\cA$ by componentwise differentiation. More
explicitly,
\[
\partial_i B=
\begin{pmatrix} \partial_i b_{1 1} & \partial_i b_{1 2} & \dots &\partial_i b_{1 l}\\
                \partial_i b_{2 1} & \partial_i b_{2 2} & \dots &\partial_i b_{2 l}\\
                  \dots   & \dots    & \dots &\dots\\
                \partial_i b_{k 1} & \partial_i b_{k 2} & \dots &\partial_i b_{k l}
\end{pmatrix} \quad \text{for} \quad B=
\begin{pmatrix}   b_{1 1} & b_{1 2} & \dots &b_{1 l}\\
                  b_{2 1} & b_{2 2} & \dots &b_{2 l}\\
                  \dots   & \dots    & \dots &\dots\\
                  b_{k 1} & b_{k 2} & \dots &b_{k l}
\end{pmatrix}.
\]
In particular, given any $\zeta=v\ast e\in \cM$, where $v\in\lAm$
regarded as a row matrix, we have $\partial_i\zeta = (\partial_i
v)\ast e + v\ast\partial_i(e)$ by the Leibniz rule. While the first
term belongs to $\cM$, the second term does not in general.
Therefore, $\partial_i$ ($i=1, 2, \dots, n$) send $\cM$ to some
subspace of $\lAm$ different from $\cM$.

Let $\omega_i \in \bM_m(\cA)$ ($i=1, 2, \dots, n$) be $m\times
m$-matrices with entries in $\cA$ satisfying the following
condition:
\begin{eqnarray} \label{connect-property}
e\ast\omega_i\ast(1-e) = -e\ast\partial_i e, \quad \forall i.
\end{eqnarray}
Define the $\R[[\barh]]$-linear maps $\nabla_i$ ($i=1, 2, \dots, n$)
from $\cM$ to $\lAm$ by
\[
\nabla_i \zeta = \partial_i\zeta + \zeta \ast\omega_i , \quad
\forall \zeta \in \cM.
\]
Then each $\nabla_i$ is a covariant derivative on the noncommutative
bundle $\cM$ in the sense of Theorem \ref{key} below. They together
define a {\em connection} on $\cM$.
\begin{theorem}\label{key}
The maps $\nabla_i$ ($i=1, 2, \dots, n$) have the following
properties. For all $\zeta\in \cM$ and $a\in \cA$,
\[\nabla_i\zeta \in \cM \quad  \text{and} \quad
\nabla_i(a \ast \zeta)=\partial_i(a)\ast \zeta + a \ast\nabla_i\zeta.\]
\end{theorem}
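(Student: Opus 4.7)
The plan is to verify the two assertions in the reverse order from which they are stated, since the Leibniz property (the second assertion) is essentially immediate, while the fact that $\nabla_i$ maps $\cM$ into itself is where the defining condition \eqref{connect-property} on $\omega_i$ comes in.

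First I would dispatch the Leibniz rule. Given $\zeta \in \cM$ and $a \in \cA$, I expand
\[
\nabla_i(a\ast\zeta) = \partial_i(a\ast\zeta) + (a\ast\zeta)\ast\omega_i.
\]
Applying the Leibniz rule \eqref{Leibniz} to the first term and the associativity of $\ast$ to the second term immediately gives $\partial_i(a)\ast\zeta + a\ast\partial_i\zeta + a\ast(\zeta\ast\omega_i) = \partial_i(a)\ast\zeta + a\ast\nabla_i\zeta$. No use of the idempotent $e$ is required here.

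The substantive step is showing $\nabla_i\zeta \in \cM$. Since $\cM = \lAm\ast e$ is characterised by the equation $\xi\ast(1-e) = 0$ (equivalently $\xi\ast e = \xi$), it suffices to prove that $(\nabla_i\zeta)\ast(1-e) = 0$. I would extract two preliminary identities by differentiating the two identities $e\ast e = e$ and $\zeta\ast(1-e) = 0$: the first yields
\[
\partial_i e \,=\, (\partial_i e)\ast e + e\ast(\partial_i e),
\]
and the second, combined with the Leibniz rule, yields $(\partial_i\zeta)\ast(1-e) = \zeta\ast(\partial_i e)$. Then I compute
\[
(\nabla_i\zeta)\ast(1-e) \,=\, (\partial_i\zeta)\ast(1-e) + \zeta\ast\omega_i\ast(1-e).
\]
Using the previous observation for the first summand, and writing $\zeta = \zeta\ast e$ together with the defining condition \eqref{connect-property} for the second summand, I obtain $\zeta\ast(\partial_i e) + \zeta\ast(e\ast\omega_i\ast(1-e)) = \zeta\ast(\partial_i e) - \zeta\ast(e\ast\partial_i e)$. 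The two terms on the right cancel against each other after one more application of $\zeta = \zeta\ast e$ and the identity $\partial_i e = (\partial_i e)\ast e + e\ast(\partial_i e)$ solved for $e\ast\partial_i e$.

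The only genuinely non-trivial input is the precise form of the condition \eqref{connect-property}, which is tailored exactly to annihilate the unwanted component $\zeta\ast(\partial_i e)$ of $\partial_i\zeta$ lying in $\lAm\ast(1-e)$. There is no real obstacle beyond keeping track of which factors absorb $e$ from the left versus the right; the whole argument is a short formal calculation within $(\cA,\ast)$ relying only on associativity, the Leibniz rule \eqref{Leibniz}, idempotency of $e$, and \eqref{connect-property}.
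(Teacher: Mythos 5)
Your proof is correct and takes essentially the same route as the paper's: both reduce the membership claim to a short computation using $\zeta\ast e=\zeta$, the Leibniz rule \eqref{Leibniz}, and the defining condition \eqref{connect-property}, the only cosmetic difference being that you verify $(\nabla_i\zeta)\ast(1-e)=0$ whereas the paper verifies the equivalent identity $\nabla_i(\zeta)\ast e=\nabla_i\zeta$, and both treat the Leibniz property as immediate. (Your final cancellation in fact needs only one more use of $\zeta=\zeta\ast e$; the differentiated idempotent identity is not required there.)
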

\begin{proof}
For any $\zeta\in \cM$, we have
\[
\begin{aligned}
\nabla_i(\zeta)\ast e   &=\partial_i(\zeta)\ast e + \zeta \ast\omega_i\ast e\\
                    &= \partial_i\zeta + \zeta \ast(\omega_i\ast e -
                    \partial_i e),
\end{aligned}
\]
where we have used the Leibniz rule and also the fact that $\zeta
\ast e = \zeta$. Using this latter fact again, we have
$ \zeta\ast (\omega_i\ast e - \partial_i e) =  \zeta\ast
(e\ast\omega_i\ast e -e\ast\partial_i e) $, and by the defining
property \eqref{connect-property} of $\omega_i$, we obtain $\zeta
\ast(e\ast\omega_i\ast e -e\ast\partial_i\ast e) = \zeta
\ast\omega_i$. Hence
\[
\nabla_i(\zeta)\ast  e = \partial_i\zeta + \zeta\ast \omega_i
=\nabla_i\zeta,
\]
proving that $\nabla_i\zeta\in\cM$. The second part of the theorem
immediately follows from the Leibniz rule.
\end{proof}

We shall also say that the set of $\omega_i$ ($i=1, 2, \dots, n$) is
a connection on $\cM$. Since $e\ast\partial_i e =
\partial_i(e)\ast(1-e)$, one obvious choice for $\omega_i$ is $\omega_i
= - \partial_i e$, which we shall refer to as the {\em canonical
connection} on $\cM$.

The following result is an easy consequence of \eqref{connect-property}.
\begin{lemma}
If $\omega_i$ ($i=1, 2, \dots, n$) define a connection on $\cM$,
then so do also $\omega_i + \phi_i\ast e $ ($i=1, 2, \dots, n$) for
any $m\times m$-matrices $\phi_i$ with entries in $\cA$.
\end{lemma}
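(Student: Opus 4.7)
The plan is to verify directly that the shifted matrices $\omega_i' := \omega_i + \phi_i \ast e$ satisfy the defining condition \eqref{connect-property} for a connection. Since \eqref{connect-property} is the only requirement on the $\omega_i$, this will suffice.

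First I would compute $e \ast \omega_i' \ast (1-e)$ by $\R[[\barh]]$-linearity of the star product, splitting it as
\[
e \ast \omega_i' \ast (1-e) \;=\; e \ast \omega_i \ast (1-e) \;+\; e \ast \phi_i \ast e \ast (1-e).
\]
The first summand equals $-e \ast \partial_i e$ by the hypothesis that $\omega_i$ is a connection.

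Next I would handle the second summand. The key observation is the idempotent identity $e \ast e = e$, which gives $e \ast (1 - e) = e - e \ast e = 0$. Hence $e \ast \phi_i \ast e \ast (1-e) = e \ast \phi_i \ast \bigl(e \ast (1-e)\bigr) = 0$, using associativity of $\ast$ on $\bM_m(\cA)$.

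Combining the two gives $e \ast \omega_i' \ast (1-e) = -e \ast \partial_i e$, so $\omega_i'$ satisfies \eqref{connect-property} and therefore defines a connection on $\cM$. There is no real obstacle here: the proof is a one-line consequence of the idempotency of $e$ and associativity of the Moyal matrix product, and does not even require unpacking the definition of $\nabla_i$.
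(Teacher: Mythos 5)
Your proof is correct and is exactly the ``easy consequence of \eqref{connect-property}'' that the paper alludes to without writing out: the added term is killed because $e\ast(1-e)=e-e\ast e=0$ by idempotency. Nothing further is needed.
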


For a given connection $\omega_i$ ($i=1, 2, \dots, n$), we consider
$[\nabla_i, \nabla_j]= \nabla_i \nabla_j -  \nabla_j \nabla_i$ with
the right hand side understood as composition of maps on $\cM$.
Simple calculations show that for all $\zeta\in\cM$,
\[
\begin{aligned}
{[\nabla_i, \nabla_j]}\zeta = \zeta\ast \cR_{i j} \quad \text{with}
\quad \cR_{i j}:=\partial_i \omega_j - \partial_j \omega_i -
[\omega_i, \ \omega_j]_\ast,
\end{aligned}
\]
where $[\omega_i, \omega_j]_\ast = \omega_i\ast \omega_j
-\omega_j\ast \omega_i$ is the commutator. We call $\cR_{i j}$ the
{\em curvature} of $\cM$ associated with the connection $\omega_i$.

For all $\zeta\in\cM$,
\[
\begin{aligned}
{[\nabla_i, \nabla_j]}\nabla_k \zeta &= \partial_k(\zeta) \ast
\cR_{i j}
+ \zeta \ast\omega_k\ast\cR_{i j}, \\
\nabla_k[\nabla_i, \nabla_j] \zeta &= \partial_k(\zeta)\ast \cR_{i
j} + \zeta\ast (\partial_k\cR_{i j} + \cR_{i j}\ast \omega_k).
\end{aligned}
\]
Define the following covariant derivatives of the curvature:
\begin{eqnarray}
\nabla_k \cR_{i j} := \partial_k\cR_{i j} + \cR_{i j}\ast \omega_k -
\omega_k\ast \cR_{i j},
\end{eqnarray}
we have
\[
{[\nabla_k, [\nabla_i, \nabla_j]]} \zeta = \zeta \ast\nabla_k \cR_{i
j}, \quad \forall \zeta\in\cM.
\]
The Jacobian identity $[\nabla_k, [\nabla_i, \nabla_j]]+[\nabla_j,
[\nabla_k, \nabla_i]]+[\nabla_i, [\nabla_j, \nabla_k]]=0$ leads to
\[
\zeta\ast (\nabla_k \cR_{i j}+\nabla_j \cR_{k i}+\nabla_i \cR_{j
k})=0, \quad \forall \zeta\in\cM.
\]
From this we immediately see that $e\ast(\nabla_k \cR_{i j}+\nabla_j
\cR_{k i}+\nabla_i \cR_{j k})=0$. In fact, the following stronger
result holds.
\begin{theorem} The curvature satisfies the following {\em Bianchi identity}:
\[
\nabla_k \cR_{i j}+\nabla_j \cR_{k i}+\nabla_i \cR_{j k}=0.
\]
\end{theorem}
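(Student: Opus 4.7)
The plan is to prove this by direct computation in the matrix algebra $\bM_m(\cA)$. The Jacobi-of-operators argument sketched immediately before the theorem only yields $e\ast(\nabla_k\cR_{ij}+\nabla_j\cR_{ki}+\nabla_i\cR_{jk})=0$ (it shows $\zeta\ast(\cdots)=0$ for every $\zeta\in\cM$, and the basis rows of $e$ give exactly left multiplication by $e$), whereas the theorem asserts vanishing in the full matrix algebra, unsandwiched. So the step beyond what is already displayed must be carried out by expanding the definitions.

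Set $B_{ijk}:=\nabla_k\cR_{ij}+\nabla_j\cR_{ki}+\nabla_i\cR_{jk}$. I would substitute
\[
\nabla_k\cR_{ij}=\partial_k\cR_{ij}+\cR_{ij}\ast\omega_k-\omega_k\ast\cR_{ij},\qquad \cR_{ij}=\partial_i\omega_j-\partial_j\omega_i-[\omega_i,\omega_j]_\ast,
\]
into $B_{ijk}$ and sort the resulting expression into three buckets: (i) second partials of a single $\omega$; (ii) \emph{mixed} terms of the form $\partial_a\omega_b\ast\omega_c$ or $\omega_c\ast\partial_a\omega_b$; (iii) pure cubic products $\omega_a\ast\omega_b\ast\omega_c$. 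Each bucket will be shown to vanish under the cyclic sum over $(i,j,k)$.

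For bucket (i) the contribution $\partial_k\partial_i\omega_j-\partial_k\partial_j\omega_i$ from $\partial_k\cR_{ij}$ together with its two cyclic companions cancels because the $\partial_i$ mutually commute. For bucket (ii) the mixed terms enter from two distinct sources: Leibniz-expanding $\partial_k[\omega_i,\omega_j]_\ast=[\partial_k\omega_i,\omega_j]_\ast+[\omega_i,\partial_k\omega_j]_\ast$ inside $\partial_k\cR_{ij}$, and multiplying the $\partial_i\omega_j-\partial_j\omega_i$ part of $\cR_{ij}$ on the right and left by $\omega_k$. A term-by-term match-up shows that, in the cyclic sum, each individual expression $\partial_a\omega_b\ast\omega_c$ (and similarly each $\omega_c\ast\partial_a\omega_b$) appears exactly twice with opposite signs. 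For bucket (iii), only the contributions $-[\omega_i,\omega_j]_\ast\ast\omega_k+\omega_k\ast[\omega_i,\omega_j]_\ast=[\omega_k,[\omega_i,\omega_j]_\ast]_\ast$ survive, and their cyclic sum vanishes by the Jacobi identity for the commutator $[\,\cdot\,,\,\cdot\,]_\ast$, which is automatic in any associative algebra.

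There is no conceptual obstacle; the only real hazard is bookkeeping, since about two dozen terms with signs and cyclic index shifts must be tracked. Organising the calculation into the three buckets above, and invoking the associative Jacobi identity at the cubic stage rather than expanding products out fully, keeps the argument short and transparent.
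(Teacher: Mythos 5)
Your proposal is correct and follows essentially the same route as the paper: a direct combinatorial expansion of $\nabla_k\cR_{ij}$ in which the second partials cancel because the $\partial_i$ commute, the mixed terms $\partial_a\omega_b\ast\omega_c$ cancel pairwise in the cyclic sum against the Leibniz expansion of $\partial_k[\omega_i,\omega_j]_\ast$, and the cubic terms vanish by the Jacobi identity for $[\,\cdot\,,\,\cdot\,]_\ast$. Your three buckets correspond exactly to the paper's $A_{ijk}$, its $B_{ijk}$ combined with the $\partial_k[\omega_i,\omega_j]_\ast$ terms, and the double commutators, and your observation that the preceding operator argument only yields the identity sandwiched by $e$ is also the point the paper makes in calling the theorem a ``stronger result.''
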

\begin{proof}
The proof is entirely combinatorial. Let
\[
\begin{aligned}
A_{i j k} &= \partial_k\partial_i\omega_j -
\partial_k\partial_j\omega_i, \\
B_{i j k} &= [\partial_i\omega_j, \omega_k]_\ast -
[\partial_j\omega_i, \omega_k]_\ast.
\end{aligned}
\]
Then we can express $\nabla_k\cR_{i j}$ as
\[\nabla_k\cR_{i j} = A_{i j k} + B_{i j k} - \partial_k[\omega_i,\
\omega_j]_\ast - [[\omega_i,\ \omega_j]_\ast, \ \omega_k]_\ast.
\]
Note that
\[
\begin{aligned}
A_{i j k} + A_{j k i} + A_{k i j}&=0, \\
B_{i j k} + B_{j k i} + B_{k i j}&= \partial_k[\omega_i,\
\omega_j]_\ast + \partial_i[\omega_j,\
\omega_k]_\ast+\partial_j[\omega_k,\ \omega_i]_\ast.
\end{aligned}
\]
Using these relations together with the Jacobian identity
\[
[[\omega_i,\ \omega_j]_\ast, \ \omega_k]_\ast+[[\omega_j,\
\omega_k]_\ast, \ \omega_i]_\ast+[[\omega_k,\ \omega_i]_\ast, \
\omega_j]_\ast=0,
\]
we easily prove the Bianchi identity.
\end{proof}

\subsection{Gauge transformations}\label{subsect-gauge}

Let $GL_m(\cA)$ be the group of invertible $m\times m$-matrices with
entries in $\cA$. Let $\cG$ be the subgroup defined by
\begin{eqnarray}\label{gauge-group}
\cG=\{ g\in GL_m(\cA) \mid e\ast g = g\ast e\},
\end{eqnarray}
which will be referred to as the {\em gauge group}. There is a right
action of $\cG$ on $\cM$ defined, for any $\zeta\in \cM$ and
$g\in\cG$, by $\zeta \times g \mapsto \zeta\cdot g:=\zeta \ast g$,
where the right side is defined by matrix multiplication. Clearly,
$\zeta \ast g \ast e = \zeta \ast g$. Hence $\zeta \ast g\in\cM$,
and we indeed have a $\cG$ action on $\cM$.

For a given $g\in\cG$, let
\begin{eqnarray} \label{connect-gauged}
\omega_i^g =g^{-1}\ast\omega_i\ast g - g^{-1}\ast\partial_i g.
\end{eqnarray}
Then
\[
\begin{aligned}
e\ast \omega_i^g\ast (1-e) &= g^{-1}\ast e\ast \omega_i
\ast(1-e)\ast g - g^{-1}\ast e\ast
\partial_i(g)\ast (1-e).
\end{aligned}
\]
By \eqref{connect-property}, $g^{-1}\ast e\ast\omega_i\ast (1-e)\ast g = - g^{-1}\ast e\ast\partial_i(e)\ast g$.
Using the defining property of the gauge group $\cG$, we can show that
\[
\begin{aligned}
g^{-1}\ast e\ast\omega_i\ast (1-e)\ast g
&=- e\ast\partial_i e  +  g^{-1}\ast e \ast\partial_i(g)\ast(1-e).
\end{aligned}
\]
Therefore,
$
e\ast\omega_i^g \ast(1-e) = - e\ast\partial_i e.
$
This shows that the $\omega_i^g$ satisfy the condition
\eqref{connect-property}, thus form a connection on $\cM$.

Now for any given $g\in\cG$,  define the maps $\nabla_i^g$ on $\cM$
by
\[
\nabla_i^g\zeta =\partial_i\zeta + \zeta\ast \omega_i^g, \quad
\forall \zeta.
\]
Also, let $\cR_{i j}^g=\partial_i \omega_j^g - \partial_j \omega_i^g
-[\omega_i^g, \, \omega_j^g]_\ast$ be the curvature corresponding to
the connection $\omega_i^g$. Then we have the following result.
\begin{lemma}\label{gauge} Under a gauge transformation procured by
$g\in\cG$,
\[
\begin{aligned} &\nabla_i^g(\zeta\ast g) =\nabla_i(\zeta)\ast g, \quad
\forall \zeta\in\cM; \\ &\cR_{i j}^g = g^{-1}\ast \cR_{i j} \ast g.
\end{aligned}
\]
\end{lemma}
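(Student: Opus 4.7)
The plan is to prove both identities by direct calculation, using only the definition of $\omega_i^g$, the Leibniz rule for the derivations $\partial_i$, and the identity $\partial_i(g^{-1}) = -g^{-1}\ast\partial_i(g)\ast g^{-1}$ (obtained by differentiating $g^{-1}\ast g = 1$).

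For the first identity I would simply unpack the definition:
\[
\nabla_i^g(\zeta\ast g) = \partial_i(\zeta\ast g) + (\zeta\ast g)\ast\omega_i^g
= \partial_i(\zeta)\ast g + \zeta\ast\partial_i(g) + (\zeta\ast g)\ast\omega_i^g.
\]
Substituting $\omega_i^g = g^{-1}\ast\omega_i\ast g - g^{-1}\ast\partial_i g$ and using $g\ast g^{-1}=1$, the last summand equals $\zeta\ast\omega_i\ast g - \zeta\ast\partial_i(g)$, whose second term cancels the Leibniz contribution. What remains is $(\partial_i\zeta + \zeta\ast\omega_i)\ast g = \nabla_i(\zeta)\ast g$, as claimed.

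For the curvature identity, I would substitute the formula for $\omega_i^g$ into $\cR_{ij}^g=\partial_i\omega_j^g - \partial_j\omega_i^g - [\omega_i^g,\omega_j^g]_\ast$ and track the cancellations. Using the $\partial_i(g^{-1})$ formula, $\partial_i\omega_j^g$ splits into five pieces, and similarly for $\partial_j\omega_i^g$; after antisymmetrization in $(i,j)$ the genuinely second-order pieces $g^{-1}\ast\partial_i\partial_j g$ drop out, and one is left with $g^{-1}\ast(\partial_i\omega_j-\partial_j\omega_i)\ast g$ plus six mixed terms of the shape $g^{-1}\ast\partial_i g\ast g^{-1}\ast\omega_j\ast g$, $g^{-1}\ast\omega_j\ast\partial_i g$, and $g^{-1}\ast\partial_i g\ast g^{-1}\ast\partial_j g$, together with their $(i \leftrightarrow j)$ partners. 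Expanding $[\omega_i^g,\omega_j^g]_\ast$ directly (using $g\ast g^{-1}=1$) produces $g^{-1}\ast[\omega_i,\omega_j]_\ast\ast g$ plus precisely the same six mixed terms with opposite signs. After the cancellations the matrix identity $\cR_{ij}^g = g^{-1}\ast\cR_{ij}\ast g$ drops out.

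The main obstacle is purely bookkeeping: in the noncommutative Moyal setting one must be scrupulous about the order of factors and about where each $g^{-1}\ast\partial_i g$ sits relative to $\omega_j$, since the star product and matrix multiplication are both noncommutative. As a sanity check one can verify the identity at the level of its action on $\cM$: applying the first identity twice gives $[\nabla_i^g,\nabla_j^g](\zeta\ast g) = ([\nabla_i,\nabla_j]\zeta)\ast g = \zeta\ast\cR_{ij}\ast g$, which coincides with $(\zeta\ast g)\ast\cR_{ij}^g$; replacing $\zeta$ by $\zeta\ast g^{-1}\in\cM$ (which is in $\cM$ since $\cG$ acts on $\cM$) then yields $\zeta\ast\cR_{ij}^g = \zeta\ast g^{-1}\ast\cR_{ij}\ast g$ for all $\zeta\in\cM$, consistent with the direct matrix identity.
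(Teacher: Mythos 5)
Your proposal is correct and follows essentially the same route as the paper: the first identity is obtained by unpacking $\nabla_i^g(\zeta\ast g)$ via the Leibniz rule and cancelling the $\zeta\ast\partial_i g$ terms, and the second by substituting $\omega_i^g$ into the curvature formula and verifying that the extra terms in $\partial_i\omega_j^g-\partial_j\omega_i^g$ exactly match those produced by $[\omega_i^g,\omega_j^g]_\ast$ (the paper packages these as commutators with $\partial_i(g^{-1})\ast g$, which is just a different bookkeeping of your ``six mixed terms''). Your closing consistency check is a nice extra, though, as you implicitly acknowledge, it only recovers $e\ast\cR_{ij}^g=e\ast g^{-1}\ast\cR_{ij}\ast g$ rather than the full matrix identity, so the direct computation is still needed.
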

\begin{proof}
Note that
\[
\nabla_i^g(\zeta\ast g) = \partial_i(\zeta)\ast g + \zeta\ast
\partial_i g + \zeta\ast g\ast \omega_i^g = (\partial_i\zeta +\zeta\ast
\omega_i)\ast g.
\]
This proves the first formula.

To prove the second claim, we use the following formulae
\[
\begin{aligned}
\partial_i \omega_j^g - \partial_j \omega_i^g
&= g^{-1}\ast ( \partial_i \omega_j - \partial_j \omega_i)\ast g -
\partial_i(g^{-1}) \ast\partial_j g + \partial_j(g^{-1}) \ast\partial_i g\\
& + [\partial_i(g^{-1})\ast g,\ g^{-1}\ast\omega_j\ast g]_\ast -
[\partial_j (g^{-1})\ast g,\ g^{-1}\ast\omega_i\ast g]_\ast; \\
[\omega_i^g, \ \omega_j^g]_\ast&= g^{-1}\ast[\omega_i, \
\omega_j]_\ast \ast g -
\partial_i(g^{-1}) \ast\partial_j g + \partial_j(g^{-1})\ast \partial_i g\\
& + [\partial_i(g^{-1})\ast g,\ g^{-1}\ast\omega_j\ast g]_\ast -
[\partial_j (g^{-1})\ast g,\ g^{-1}\ast\omega_i\ast g]_\ast.
\end{aligned}
\]
Combining these formulae together we obtain $\cR_{i j}^g = g^{-1}
\cR_{i j} g$. This completes the proof of the lemma.
\end{proof}

\subsection{Vector bundles associated to right projective modules}

Connections and curvatures can be introduced for the right bundle
$\tilde\cM=e\ast\rAm$ in much the same way. Let
$\tilde\omega_i\in\bM_m(\cA)$ ($i=1, 2, \dots, n$) be matrices
satisfying the condition that
\begin{eqnarray}\label{connection-right}
(1-e)\ast\tilde\omega_i\ast e = \partial_i(e)\ast e.
\end{eqnarray}
Then we can introduce a connection consisting of the right covariant
derivatives $\tilde\nabla_i$ ($i=1, 2, \dots, n$) on $\tilde\cM$
defined by
\[
\begin{aligned}
\tilde\nabla_i: \tilde\cM \longrightarrow \tilde\cM,  &\quad &\xi
\mapsto \tilde\nabla_i\xi = \partial_i\xi - \tilde\omega_i\ast\xi.
\end{aligned}
\]
It is easy to show that $\tilde\nabla_i(\xi \ast a) =
\tilde\nabla_i(\xi)\ast a + \xi\ast \partial_ia$ for all $a\in\cA$.

Note that if $\tilde\omega_i=\partial_i e$ for all
$i$, the condition \eqref{connection-right} is satisfied. We call
them  the {\em canonical connection} on $\tilde\cM$.

Returning to a general connection $\tilde\omega_i$, we define the
associated curvature by
\[
\tilde\cR_{i j} = \partial_i\tilde\omega_j -
\partial_j\tilde\omega_i -[\tilde\omega_i, \ \tilde\omega_j]_\ast.
\]
Then for all $\xi\in\tilde\cM$, we have
\[ [\tilde\nabla_i, \ \tilde\nabla_j]\xi = -\tilde\cR_{i j}\ast\xi. \]
We further define the covariant derivatives of $\tilde\cR_{i j}$ by
\[ \tilde\nabla_k\tilde\cR_{i j} = \partial_k \tilde\cR_{i j} +
\tilde\omega_k \ast\tilde\cR_{i j}- \tilde\cR_{i j}\ast
\tilde\omega_k.
\]
Then we have the following result.
\begin{lemma} The curvature on the right bundle $\tilde\cM$ satisfies the
Bianchi identity
\[\tilde\nabla_i\tilde\cR_{j k} +
\tilde\nabla_j\tilde\cR_{k i}+ \tilde\nabla_k\tilde\cR_{i j}=0. \]
\end{lemma}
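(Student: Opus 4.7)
The proof will follow the same combinatorial strategy used for the Bianchi identity on $\cM$; indeed, the two identities are essentially the same algebraic statement about a triple $\tilde\omega_i$ and its associated curvature, independent of whether the underlying module is a left or right module. I would start by substituting the definition of the curvature into
\[
\tilde\nabla_k \tilde\cR_{i j} = \partial_k \tilde\cR_{i j} + [\tilde\omega_k, \tilde\cR_{i j}]_\ast,
\]
and organise the resulting terms into three groups in direct analogy with the earlier proof: a pure second-derivative part $\tilde A_{i j k} = \partial_k\partial_i\tilde\omega_j - \partial_k\partial_j\tilde\omega_i$, a mixed part built from brackets of the form $[\tilde\omega_k, \partial_i\tilde\omega_j]_\ast$ and $[\tilde\omega_k, \partial_j\tilde\omega_i]_\ast$, and the remaining pieces $-\partial_k[\tilde\omega_i, \tilde\omega_j]_\ast$ and $-[\tilde\omega_k, [\tilde\omega_i, \tilde\omega_j]_\ast]_\ast$.

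Taking the cyclic sum over $(i, j, k)$ I would then check three cancellations. The $\tilde A$-terms sum to zero by commutativity of the partial derivatives $\partial_i$. The mixed terms, after applying the Leibniz rule \eqref{Leibniz} to expand each $\partial_k[\tilde\omega_i, \tilde\omega_j]_\ast$ as $[\partial_k \tilde\omega_i, \tilde\omega_j]_\ast + [\tilde\omega_i, \partial_k\tilde\omega_j]_\ast$, telescope against the cyclic sum of the $-\partial_k[\tilde\omega_i, \tilde\omega_j]_\ast$ terms. Finally, the triple commutators vanish by the Jacobi identity
\[
[\tilde\omega_k, [\tilde\omega_i, \tilde\omega_j]_\ast]_\ast + [\tilde\omega_i, [\tilde\omega_j, \tilde\omega_k]_\ast]_\ast + [\tilde\omega_j, [\tilde\omega_k, \tilde\omega_i]_\ast]_\ast = 0,
\]
which holds because $(\bM_m(\cA), \ast)$ is an associative algebra.

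There is no essential obstacle: this is precisely the same combinatorial rearrangement that proves the left-module Bianchi identity, the only differences being the side on which $\tilde\omega_k$ acts and an overall sign in $[\tilde\nabla_i, \tilde\nabla_j]\xi = -\tilde\cR_{i j}\ast \xi$, neither of which affects the algebraic identity satisfied by the curvature alone. A conceptually appealing alternative would be to first verify that $[\tilde\nabla_k, [\tilde\nabla_i, \tilde\nabla_j]]\xi = -(\tilde\nabla_k \tilde\cR_{i j})\ast \xi$ for all $\xi \in \tilde\cM$ and then invoke the operator Jacobi identity to conclude $(\tilde\nabla_k \tilde\cR_{i j} + \tilde\nabla_j \tilde\cR_{k i} + \tilde\nabla_i \tilde\cR_{j k})\ast \xi = 0$. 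However, this route only yields the identity after right multiplication by elements of $\tilde\cM$, i.e.\ modulo left multiplication by $e$, so the direct combinatorial argument is still needed to obtain the stronger matrix identity as stated.
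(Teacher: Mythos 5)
Your overall strategy --- mirroring the combinatorial proof of the left-bundle Bianchi identity --- is the right one, and since the paper states this lemma without proof, that is exactly what is called for. But there is a genuine sign problem at the heart of your argument, precisely at the point you dismiss as harmless. Starting from $\tilde\nabla_k\tilde\cR_{ij}=\partial_k\tilde\cR_{ij}+[\tilde\omega_k,\tilde\cR_{ij}]_\ast$ (the paper's displayed formula, which you adopt), the mixed part of $[\tilde\omega_k,\tilde\cR_{ij}]_\ast$ is $[\tilde\omega_k,\partial_i\tilde\omega_j]_\ast-[\tilde\omega_k,\partial_j\tilde\omega_i]_\ast$, and its cyclic sum over $(i,j,k)$ equals
\[
-\left(\partial_k[\tilde\omega_i,\tilde\omega_j]_\ast+\partial_i[\tilde\omega_j,\tilde\omega_k]_\ast+\partial_j[\tilde\omega_k,\tilde\omega_i]_\ast\right),
\]
i.e.\ the \emph{negative} of what the Leibniz rule produces from the $\partial_k[\tilde\omega_i,\tilde\omega_j]_\ast$ terms. (Compare the left-bundle proof: there the mixed terms are $[\partial_i\omega_j,\omega_k]_\ast-[\partial_j\omega_i,\omega_k]_\ast$, with the connection on the \emph{other} side of the bracket, and their cyclic sum is $+\sum_{\mathrm{cyc}}\partial_k[\omega_i,\omega_j]_\ast$.) So your two groups do not telescope; they reinforce, and the cyclic sum of $\tilde\nabla_k\tilde\cR_{ij}$ comes out to $-2\sum_{\mathrm{cyc}}\partial_k[\tilde\omega_i,\tilde\omega_j]_\ast$, which does not vanish for a general connection. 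The ``overall sign'' you wave away is exactly what controls this cancellation.

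The resolution is that the paper's displayed definition of $\tilde\nabla_k\tilde\cR_{ij}$ carries a sign slip: the definition consistent with the identity $[\tilde\nabla_k,[\tilde\nabla_i,\tilde\nabla_j]]\xi=-\tilde\nabla_k(\tilde\cR_{ij})\ast\xi$ stated immediately after the lemma (and with the general fact that an endomorphism acting by left $\ast$-multiplication by a matrix $T$ is differentiated by $T\mapsto\partial_kT-\tilde\omega_k\ast T+T\ast\tilde\omega_k$) is
\[
\tilde\nabla_k\tilde\cR_{ij}=\partial_k\tilde\cR_{ij}-[\tilde\omega_k,\tilde\cR_{ij}]_\ast.
\]
With this sign your decomposition and all three cancellations go through verbatim and the argument becomes literally the left-bundle proof with tildes. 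A complete proof must either adopt this corrected definition explicitly or derive it from the operator identity; as written, your telescoping step fails. Your closing remark about the weaker operator-Jacobi route is correct in spirit, except that what it yields is $\bigl(\sum_{\mathrm{cyc}}\tilde\nabla_k\tilde\cR_{ij}\bigr)\ast e=0$, i.e.\ the identity modulo \emph{right} multiplication by $e$, not left.
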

By direct calculations we can also prove the following result:
\[
[\tilde\nabla_k, [\tilde\nabla_i, \ \tilde\nabla_j]]\xi
=-\tilde\nabla_k(\tilde\cR_{i j})\ast \xi, \quad \forall
\xi\in\tilde\cM.
\]

Consider the gauge group $\cG$ defined by \eqref{gauge-group}, which
has a right action on $\tilde\cM$:
\[
\tilde\cM \times \cG \longrightarrow \tilde\cM, \quad \xi\times g
\mapsto \xi\cdot g := g^{-1}\ast \xi.
\]
Under a gauge transformation procured by $g\in\cG$,
\[
\tilde\omega_i \mapsto \tilde\omega_i^g := g^{-1} \ast
\tilde\omega_i \ast g +
\partial_i(g^{-1})  \ast g.
\]
The connection $\tilde\nabla_i^g$ on $\tilde\cM$ defined by
\[ \tilde\nabla_i^g\xi = \partial_i\xi - \tilde\omega_i^g \ast\xi \]
satisfies the following relation for all $\xi\in\tilde\cM$:
\[
\tilde\nabla_i^g(g^{-1}\ast\xi) = g^{-1}\ast\tilde\nabla_i\xi.
\]
Furthermore, the gauge transformed curvature
\[
\tilde\cR_{i j}^g := \partial_i \tilde\omega_j^g - \partial_j
\tilde\omega_i^g - [\tilde\omega_i^g, \tilde\omega_j^g]_\ast
\]
is related to $\tilde\cR_{i j}$ by
\[ \tilde\cR_{i j}^g = g^{-1}\ast \tilde\cR_{i j}\ast g.\]

Given any $\Lambda\in\bM_m(\cA)$, we can define the $\cA$-bimodule
map
\begin{eqnarray}\label{form}
\langle \ , \ \rangle:
\cM\otimes_{\R[[\barh]]}\tilde\cM\longrightarrow \cA, \quad
\zeta\otimes\xi \mapsto \langle \zeta , \xi \rangle
=\zeta\ast\Lambda\ast\xi,
\end{eqnarray}
where $\zeta \ast\Lambda \ast\xi$ is defined by matrix
multiplication. We shall say that the bimodule homomorphism is {\em
gauge invariant} if for any element $g$ of the gauge group $\cG$,
\[
\langle \zeta\cdot g, \xi\cdot g \rangle = \langle \zeta, \xi
\rangle, \quad \forall \zeta\in\cM, \ \xi\in\tilde\cM.
\]
Also, the bimodule homomorphism is said to be {\em compatible with
the connections} $\omega_i$ on $\cM$ and $\tilde\omega_i$ on
$\tilde\cM$ if for all $i=1, 2, \dots, n$
\[
\partial_i \langle \zeta, \xi
\rangle = \langle \nabla_i\zeta, \xi \rangle + \langle \zeta,
\tilde\nabla_i\xi \rangle, \quad \forall \zeta\in\cM, \
\xi\in\tilde\cM.
\]

\begin{lemma}\label{compatibility}
Let $\langle \ , \ \rangle:
\cM\otimes_{\R[[\barh]]}\tilde\cM\longrightarrow \cA$ be an
$\cA$-bimodule homomorphism defined by \eqref{form} with a given
$m\times m$-matrix $\Lambda$ with entries in $\cA$. Then
\begin{enumerate}
\item \label{form-gauge} $\langle \ , \ \rangle$ is gauge invariant if
$g\ast\Lambda\ast g^{-1} = \Lambda$ for all $g\in\cG$;
\item \label{form-gauge2} $\langle \ , \ \rangle$ is compatible with the connections
$\omega_i$ on $\cM$ and $\tilde\omega_i$ on $\tilde\cM$ if for all
$i$,
\[
e\ast(\partial_i\Lambda - \omega_i \ast\Lambda +  \Lambda
\ast\tilde\omega_i)\ast e =0.
\]
\end{enumerate}
\end{lemma}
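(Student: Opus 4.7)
The plan is to verify each claim by a direct expansion using only the explicit formula $\langle\zeta,\xi\rangle = \zeta\ast\Lambda\ast\xi$, the Leibniz rule for $\partial_i$ on the Moyal algebra, the connection definitions $\nabla_i\zeta = \partial_i\zeta + \zeta\ast\omega_i$ and $\tilde\nabla_i\xi = \partial_i\xi - \tilde\omega_i\ast\xi$, and the idempotent identities $\zeta\ast e = \zeta$ for $\zeta\in\cM$ and $e\ast\xi = \xi$ for $\xi\in\tilde\cM$. Together with the gauge-action formulas $\zeta\cdot g = \zeta\ast g$ and $\xi\cdot g = g^{-1}\ast\xi$, these ingredients suffice for both statements.

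For part (1) I would simply substitute the gauge action into the pairing to obtain
\[
\langle\zeta\cdot g,\xi\cdot g\rangle
= (\zeta\ast g)\ast\Lambda\ast(g^{-1}\ast\xi)
= \zeta\ast(g\ast\Lambda\ast g^{-1})\ast\xi,
\]
and the hypothesis $g\ast\Lambda\ast g^{-1} = \Lambda$ then collapses this immediately to $\zeta\ast\Lambda\ast\xi = \langle\zeta,\xi\rangle$.

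For part (2) my strategy is to compare the two sides of the compatibility equation term by term. Leibniz gives $\partial_i\langle\zeta,\xi\rangle$ as a sum of three contributions, while the connection formulas give $\langle\nabla_i\zeta,\xi\rangle + \langle\zeta,\tilde\nabla_i\xi\rangle$ as the same sum except that the middle contribution $\zeta\ast(\partial_i\Lambda)\ast\xi$ is replaced by $\zeta\ast\omega_i\ast\Lambda\ast\xi - \zeta\ast\Lambda\ast\tilde\omega_i\ast\xi$. Equating the two therefore reduces to showing
\[
\zeta\ast\bigl(\partial_i\Lambda - \omega_i\ast\Lambda + \Lambda\ast\tilde\omega_i\bigr)\ast\xi = 0
\quad \text{for all } \zeta\in\cM,\ \xi\in\tilde\cM.
\]
The last step is to insert $\zeta = \zeta\ast e$ and $\xi = e\ast\xi$, which sandwiches the bracket between two copies of $e$; the stated hypothesis is then exactly what is required.

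I do not anticipate any serious obstacle: both parts reduce to careful bookkeeping with the order of $\ast$-products. The one conceptual point I would make explicit is that the pairing factors through $e\ast\bM_m(\cA)\ast e$, so only the $e$-sandwiched component of $\Lambda$ is ever seen by $\langle\ ,\ \rangle$. This structural observation is what permits the outer factors of $e$ in the compatibility hypothesis and, in retrospect, would even allow the gauge-invariance hypothesis of part (1) to be weakened in the same way.
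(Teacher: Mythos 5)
Your proposal is correct and follows essentially the same route as the paper: part (1) by direct substitution of the gauge action, and part (2) by computing $\partial_i\langle\zeta,\xi\rangle - \langle\nabla_i\zeta,\xi\rangle - \langle\zeta,\tilde\nabla_i\xi\rangle = \zeta\ast(\partial_i\Lambda - \omega_i\ast\Lambda + \Lambda\ast\tilde\omega_i)\ast\xi$ and then using $\zeta=\zeta\ast e$, $\xi=e\ast\xi$ to invoke the hypothesis. Your closing observation that the pairing only sees $e\ast\Lambda\ast e$, so the gauge-invariance hypothesis could likewise be weakened, is a correct refinement not made explicit in the paper.
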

\begin{proof}
Note that $\langle \zeta\cdot g , \xi\cdot g \rangle  = \zeta \ast
g\ast \Lambda\ast g^{-1}\ast \xi$ for any $g\in\cG$, $\zeta\in\cM$
and $\xi\in\tilde\cM$. Therefore $\langle \zeta\cdot g , \xi\cdot g
\rangle = \langle \zeta, \xi\rangle$ if $g \ast\Lambda \ast
g^{-1}=\Lambda$. This proves part (\ref{form-gauge}).

Now
$
\partial_i \langle \zeta, \xi\rangle = \langle \partial_i\zeta,
\xi\rangle + \langle\zeta,  \partial_i\xi\rangle +
\zeta\ast(\partial_i\Lambda - \omega_i\ast\Lambda +
\Lambda\ast\tilde\omega_i)\ast\xi. $
Thus if $\Lambda$ satisfies the condition of part
(\ref{form-gauge2}), then $\langle \ , \ \rangle$ is compatible with
the connections.
\end{proof}

%
\subsection{Canonical connections and fibre metric}\label{canonical}

Let us consider in detail the canonical connections on $\cM$ and
$\tilde\cM$ given by
\[ \omega_i = - \partial_i e, \quad \tilde\omega_i=\partial_i e. \]
A particularly nice feature in this case is that the corresponding
curvatures on the left and right bundles coincide. We have the
following formula:
\begin{eqnarray}\label{R:leqr}
\cR_{i j} = \tilde\cR_{i j}= - [\partial_i e, \ \partial_j e]_\ast.
\end{eqnarray}

Now we consider a special case of the $\cA$-bimodule map defined by
equation \eqref{form}.
\begin{definition}\label{fibre-metric}
Denote by $\g: \cM\otimes_{\R[[\barh]]} \tilde\cM \longrightarrow
\cA$ the map defined by \eqref{form} with $\Lambda$ being the
identity matrix. We shall call $\g$ the {\em fibre metric} on $\cM$.
\end{definition}
\begin{lemma}\label{metric}
The fibre metric $\g$ is gauge invariant and is compatible with the
standard connections.
\end{lemma}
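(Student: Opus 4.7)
The plan is to invoke Lemma \ref{compatibility} with $\Lambda$ specialised to the $m\times m$ identity matrix $I$ and with the canonical connections $\omega_i = -\partial_i e$ on $\cM$ and $\tilde\omega_i = \partial_i e$ on $\tilde\cM$. Both parts of the present lemma should then reduce to verifying the corresponding hypotheses of Lemma \ref{compatibility} in this special case, so the entire proof amounts to two very short calculations.

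For gauge invariance, part (\ref{form-gauge}) of Lemma \ref{compatibility} requires $g \ast \Lambda \ast g^{-1} = \Lambda$ for all $g \in \cG$. With $\Lambda = I$ this collapses to $g \ast g^{-1} = I$, which holds by the definition of $GL_m(\cA)$; nothing else is needed and, in particular, the constraint defining $\cG$ is not used here.

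For compatibility with the canonical connections, part (\ref{form-gauge2}) of Lemma \ref{compatibility} requires
$e \ast \bigl(\partial_i \Lambda - \omega_i \ast \Lambda + \Lambda \ast \tilde\omega_i\bigr) \ast e = 0.$
Substituting $\Lambda = I$, $\omega_i = -\partial_i e$, $\tilde\omega_i = \partial_i e$ and using $\partial_i I = 0$, the left-hand side simplifies to $2\, e \ast (\partial_i e) \ast e$. The only substantive step is therefore to verify the identity $e \ast (\partial_i e) \ast e = 0$, which I expect to be the single real (though mild) obstacle. I would obtain it by differentiating the idempotency relation $e \ast e = e$ via the Leibniz rule \eqref{Leibniz}, giving $(\partial_i e) \ast e + e \ast (\partial_i e) = \partial_i e$; sandwiching this equation between two copies of $e$ and simplifying with $e \ast e = e$ yields $2\, e \ast (\partial_i e) \ast e = e \ast (\partial_i e) \ast e$, whence $e \ast (\partial_i e) \ast e = 0$, completing the verification.
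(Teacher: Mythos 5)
Your proposal is correct and follows exactly the paper's route: both parts are reduced to Lemma \ref{compatibility} with $\Lambda$ the identity matrix, and the compatibility condition collapses to the identity $e\ast(\partial_i e)\ast e=0$, which the paper simply asserts and you derive (correctly) by differentiating $e\ast e=e$. No substantive difference.
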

\begin{proof}
Since $\Lambda$ is the identity matrix in the present case, it
immediately follows from Lemma \ref{compatibility}
(\ref{form-gauge}) that $\g$ is gauge invariant. Note that
$e\ast\partial_i(e)\ast e =0$ for all $i$. Using this fact in Lemma
\ref{compatibility} (\ref{form-gauge2}), we easily see that $\g$ is
compatible with the standard connections.
\end{proof}

\section{Embedded noncommutative spaces}\label{sect:surfaces}

In \cite{CTZZ}, we introduced noncommutative spaces which are
embedded in ``higher dimensions" in a geometric setting. Here we
recall this theory and reformulate it in the framework of Section
\ref{sect:bundles} in terms of projective modules. This also
provides a class of explicit examples of idempotents and related
projective modules.

\subsection{Embedded noncommutative spaces}

We shall consider only embedded spaces with Euclidean signature.
The Minkowski case is similar, which will be briefly discussed in
Section \ref{Mink}. Given
$X=\begin{pmatrix} X^1 & X^2 &\dots & X^m
\end{pmatrix}$ in $\lAm$, we define an $n\times n$ matrix $\g=(g_{i
j})_{i, j=1, 2, \dots, n}$ with entries given by
\[
g_{i j}=\sum_{\alpha=1}^m \partial_i X^\alpha \ast \partial_j
X^\alpha.
\]
Let $g^0 = \g\mod\barh$, which is an $n\times n$-matrix of smooth
functions on $U$. Assume that $g^0$ is invertible at every point $t\in
U$. Then there exists a unique $n\times n$-matrix $\left(g^{i
j}\right)$ over $\cA$ which is the right inverse of $\g$, i.e.,
\[
g_{i j}\ast g^{j k} =\delta_i^k,
\]
where we have used Einstein's convention of summing
over repeated indices. To see this, we need to examine
the definition \eqref{multiplication} of the Moyal product more carefully.
Let
\begin{eqnarray}\label{mup}
\mu_p: \cA/\barh\cA\otimes\cA/\barh\cA\longrightarrow\cA/\barh\cA,
\quad p=0, 1, 2, \dots,
\end{eqnarray}
be $\R$-linear maps defined by
\[
\mu_p(f, g) = \lim_{t'\rightarrow t} \frac{1}{p!}\left(\sum_{i j}
\theta_{i, j}\frac{\partial}{\partial t^i}\frac{\partial}{\partial
t'^j}\right)^p f(t) g(t').
\]
Then $f\ast g = \sum_{p=0}^\infty \barh^p \mu_p(f, g)$.
Now write $g_{i j} =\sum_p \barh^p g_{i j}[p]$ and $g^{i j} =\sum_p
\barh^p \tilde g^{i j}[p]$, where $(g^{i j}[0])$ is the inverse of
$(g_{i j}[0])$. Now in terms of the maps $\mu_k$ defined by
\eqref{mup}, we have
\begin{eqnarray*}
\delta_i^k &=g_{i j}\ast g^{j k} &= \sum_q \barh^q
\sum_{m+n+p=q}\mu_p(g_{i j}[m], g^{j k}[n]),
\end{eqnarray*}
which is equivalent to
\begin{eqnarray*}
g^{i j}[q] &=& -\sum_{n=1}^q\sum_{m=0}^{q-n}g^{i k}[0] \mu_n(g_{k
l}[m], \, g^{l j}[q-n-m]).
\end{eqnarray*}
Since the right-hand side involves only $g^{l j}[r]$ with $r<q$,
this equation gives a recursive formula for the right inverse of
$\g$. In the same way, we can also show that there also exists a
unique left inverse of $\g$. It follows from the associativity of
multiplication of matrices over any associative algebra that the
left and right inverses of $\g$ are equal.

It is easy to see that if $\g$ is invertible,
then $g^0$ is nonsingular.

\begin{definition}\label{def:nc-space}
We call an element $X\in {_l\cA^m}$ an {\em embedded noncommutative space} if $g^0$ is invertible
for all $t\in U$. In this case, $\g$ is called the {\em metric} of the
noncommutative space.
\end{definition}

Let
\[
E_i=\partial_i X,  \quad \tilde E^i=(E_j)^t\ast g^{j i}, \quad
E^i=g^{i j}\ast E_j,
\]
for $i=1, 2, \dots, n$, where $(E_i)^t= \begin{pmatrix}\partial_i X^1 \\
\partial_i X^2
\\ \vdots \\ \partial_i X^m\end{pmatrix}$ denotes the transpose of $E_i$. Define $e\in\bM_m(\cA)$ by
\begin{eqnarray}\label{idempotent}
\begin{aligned}
e:&= \tilde{E}^j\ast E_j \\
&=\begin{pmatrix}
\partial_i X^1\ast g^{i j}\ast \partial_j X^1  & \partial_i X^1\ast g^{i j}\ast \partial_j X^2
& \dots &\partial_i X^1\ast g^{i j}\ast \partial_j X^m \\
\partial_i X^2 \ast g^{i j}\ast  \partial_j X^1  & \partial_i X^2  \ast g^{i j}\ast  \partial_j X^2
& \dots &\partial_i X^2  \ast g^{i j}\ast  \partial_j X^m \\
\dots & \dots & \dots & \dots \\
\partial_i X^m  \ast g^{i j}\ast  \partial_j X^1  & \partial_i X^m  \ast g^{i j}\ast  \partial_j X^2
& \dots &\partial_i X^m  \ast g^{i j}\ast  \partial_j X^m
\end{pmatrix}.
\end{aligned}
\end{eqnarray}
We have the following results.
\begin{proposition}
\begin{enumerate}
\item  Under matrix multiplication, $E_i \ast \tilde{E}^j =\delta^j_i$ for all $i$ and $j$.
\item The $m\times m$ matrix $e$ satisfies $e \ast  e=e$, that is, it is an
idempotent in $\bM_m(\cA)$.
\item The left and right projective $\cA$-modules $\cM=\lAm \ast  e$
and $\tilde \cM=e \ast \rAm$ are respectively spanned by $E_i$ and
$\tilde{E}^i$. More precisely, we have
\[
\cM=\{a^i \ast  E_i \mid a^i\in\cA\}, \quad  \tilde\cM=\{\tilde{E}^i
 \ast  b_i \mid b_i \in\cA\}.
\]
\end{enumerate}
\end{proposition}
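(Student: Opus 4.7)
The plan is to prove the three parts in order, with each part feeding directly into the next. The only non-trivial input is the relation $g_{ij}\ast g^{jk}=\delta_i^k$ proven just before the proposition, together with the associativity of matrix multiplication over $(\cA,\ast)$.

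For part (1), I would compute directly. Viewing $E_i=\partial_i X$ as a $1\times m$ row and $(E_k)^t$ as an $m\times 1$ column, the outer-then-inner product gives
\[
E_i\ast\tilde E^j \;=\; E_i\ast(E_k)^t\ast g^{kj} \;=\; \sum_{\alpha=1}^m (\partial_i X^\alpha)\ast(\partial_k X^\alpha)\ast g^{kj} \;=\; g_{ik}\ast g^{kj} \;=\; \delta_i^j,
\]
where in the last step I use the definition of $g_{ik}$ and the fact that $(g^{kj})$ is the right inverse of $\g$.

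Part (2) then follows by a one-line calculation using associativity and part (1):
\[
e\ast e \;=\; \tilde E^j\ast E_j\ast \tilde E^k\ast E_k \;=\; \tilde E^j\ast\delta_j^k\ast E_k \;=\; \tilde E^j\ast E_j \;=\; e.
\]
For part (3) I need two inclusions for $\cM=\lAm\ast e$. First, part (1) gives $E_j\ast e = E_j\ast \tilde E^k\ast E_k = \delta_j^k\,E_k = E_j$, so any $a^i\ast E_i$ equals $(a^i\ast E_i)\ast e$ and lies in $\lAm\ast e$. Conversely, for $v\in\lAm$, associativity yields $v\ast e = v\ast\tilde E^j\ast E_j = a^j\ast E_j$ with $a^j:=v\ast\tilde E^j\in\cA$ (a row times a column, hence a scalar in $\cA$). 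The statement for $\tilde\cM = e\ast\rAm$ is the mirror argument: $e\ast\tilde E^i = \tilde E^i$ by the same computation with the roles reversed, and any $e\ast w$ decomposes as $\tilde E^j\ast(E_j\ast w)$.

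Honestly, there is no substantial obstacle here; the proof is purely formal once part (1) is in hand. The only place requiring care is the bookkeeping of shapes, namely keeping track of which expressions are $1\times m$ rows, $m\times 1$ columns, $m\times m$ matrices, or scalars in $\cA$, and invoking associativity of the $\ast$-multiplication of matrices consistently. This is harmless because $(\bM_m(\cA),\ast)$ is genuinely an associative $\R[[\barh]]$-algebra, as was noted just after the definition of the matrix star-product in Section~\ref{sect:Moyal}.
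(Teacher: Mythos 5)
Your proof is correct and follows essentially the same route as the paper's: part (1) via $E_i\ast(E_k)^t=g_{ik}$ and the right inverse, part (2) by the identical one-line associativity computation, and part (3) by the same two inclusions (the paper labels the inclusion $\cM\subset\{a^i\ast E_i\}$ as ``obvious'' from $e=\tilde E^j\ast E_j$, which is exactly your $v\ast e=(v\ast\tilde E^j)\ast E_j$ observation). No issues.
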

\begin{proof}
Note that $g_{i j} = E_i \ast  (E_j)^t$. Thus $E_i \ast  \tilde E^j
= E_i \ast  (E_k)^t \ast  g^{k j} = \delta_i^j$. It then immediately
follows that
\[
e \ast e = \tilde{E^i} \ast  \left(E_i  \ast \tilde{E^j}\right) \ast
E_j = \tilde{E^i}  \ast \delta^j_i  \ast E_j = e.
\]
Obviously $\cM\subset\{a^i \ast  E_i \mid a^i\in\cA\}$ and $
\tilde\cM\subset\{\tilde{E}^i \ast  b_i\mid b_i \in\cA\}$. By the
first part of the proposition, we have
\[
\begin{aligned}
a^i \ast  E_i \ast  e = a^i \ast  \left(E_i \ast  \tilde{E}^j\right)
 \ast E_j = a^j \ast  E_j,
\\
e \ast \tilde{E}^j \ast  b_j= \tilde{E}^i
 \ast \left(E_i \ast \tilde{E}^j\right) \ast  b_j = \tilde{E}^i \ast  b_i.
\end{aligned}
\]
This proves the last claim of the proposition.
\end{proof}
It is also useful to observe that $\tilde\cM=\{(E_i)^t \ast  b_i\mid
b_i \in\cA\}$ since $\g$ is invertible.

We shall denote $\cM$ and $\tilde\cM$ respectively by $TX$ and
$\tilde TX$, and refer to them as the {\em left} and {\em right
tangent bundles} of the noncommutative space $X$. Note that the
definition of the tangent bundles coincides with that in
\cite{CTZZ}.

The proposition below in particular shows that the fibre metric  $\g: TX\otimes_{\R[[\barh]]} \tilde TX
\longrightarrow \cA$ defined in Definition \ref{fibre-metric} agrees with the
metric of the embedded noncommutative space defined in Definition \ref{def:nc-space}.

\begin{proposition}
For any $\zeta=a^i \ast  E_i \in TX$ and $\xi= (E_j)^t \ast b^j\in
\tilde TX$ with $a_i, b_j\in \cA$,
\[
\g: \zeta \otimes \xi \mapsto \g(\zeta, \xi)=a^i  \ast  g_{i j} \ast
b^j.
\]
In particular, $\g(E_i, (E_j)^t) = g_{i j}$.
\end{proposition}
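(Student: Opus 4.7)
The plan is to unwind the definitions: the fibre metric $\g$ is the particular case of the bimodule map \eqref{form} with $\Lambda$ equal to the identity matrix, so for any $\zeta \in TX$ and $\xi \in \tilde{TX}$ one simply has $\g(\zeta, \xi) = \zeta \ast \xi$, interpreted as the row-times-column matrix product with entries multiplied via $\ast$.

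First I would expand $\zeta = a^i \ast E_i$ and $\xi = (E_j)^t \ast b^j$ and slide the scalars $a^i, b^j \in \cA$ out using associativity of the matrix product over $(\cA, \ast)$, which reduces the computation to evaluating $E_i \ast (E_j)^t$. Next I would recognize that $E_i$ is the $1 \times m$ row $\partial_i X$ and $(E_j)^t$ is the $m \times 1$ column with entries $\partial_j X^\alpha$, so the product is the scalar $\sum_\alpha \partial_i X^\alpha \ast \partial_j X^\alpha$, which is exactly $g_{i j}$ by the defining formula of the metric in Definition \ref{def:nc-space}. Assembling these pieces yields $\g(\zeta, \xi) = a^i \ast g_{i j} \ast b^j$, and the specialization $a^i = \delta^i_k$, $b^j = \delta^j_l$ gives $\g(E_k, (E_l)^t) = g_{k l}$.

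There is essentially no obstacle: the argument is a direct unwrapping of definitions once one observes that associativity of $\ast$-matrix multiplication lets one reorder $(a^i \ast E_i) \ast ((E_j)^t \ast b^j)$ as $a^i \ast (E_i \ast (E_j)^t) \ast b^j$. The only point worth a brief remark is that $E_i \ast (E_j)^t$ is a $1 \times 1$ matrix, so it lies in $\cA$ and commutes past $a^i$ and $b^j$ only through associativity, not through any commutativity assumption — this is what makes the placement $a^i \ast g_{i j} \ast b^j$ (rather than $a^i \ast b^j \ast g_{i j}$) the correct one.
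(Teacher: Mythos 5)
Your proposal is correct and follows essentially the same route as the paper's own proof: unwind Definition \ref{fibre-metric} with $\Lambda$ the identity, use associativity of the $\ast$-matrix product to reduce to $E_i \ast (E_j)^t = g_{i j}$, and reassemble. The remark about the placement of $g_{i j}$ between $a^i$ and $b^j$ is a nice clarification, but the argument is the paper's argument.
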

\begin{proof}
Recall from Definition \ref{fibre-metric} that $\g$ is defined by
\eqref{form} with $\Lambda$ being the identity matrix. Thus for any
$\zeta=a^i \ast  E_i \in TX$ and $\xi= (E_j)^t \ast b^j\in \tilde
TX$ with $a_i, b_j\in \cA$,
\[\g(\zeta, \xi)= a^i \ast  E_i  \ast (E_j)^t  \ast b^j = a^i  \ast g_{i j} \ast b^j. \]
This completes the proof.
\end{proof}

Let us now equip the left and right tangent bundles with the {\em
canonical connections} given by $\omega_i= -\tilde\omega_i
=-\partial_i e$, and denote the corresponding covariant derivatives
by
\[
\nabla_i: T X\longrightarrow T X, \quad \tilde\nabla_i: \tilde T
X\longrightarrow \tilde T X.
\]
In principle, one can take arbitrary connections for the tangent
bundles, but we shall not allow this option in this paper.

The following elements of $\cA$ are defined in \cite{CTZZ},
\begin{eqnarray*}
_c\Gamma_{i j l} = \frac{1}{2} \left(\partial_i g_{j l} +
\partial_j g_{l i}
-\partial_l g_{j i} \right), &\quad& \Upsilon_{i j l} = \frac{1}{2}
\left(\partial_i(E_j) \ast  (E_l)^t - E_l \ast \partial_i (E_j)^t\right),\\
\Gamma_{i j l} =  {}_c\Gamma_{i j l} + \Upsilon_{i j
 l}, &\quad& \tilde\Gamma_{i j l} =
{}_c\Gamma_{i j l}- \Upsilon_{i j l},
\end{eqnarray*}
where $\Upsilon_{i j k}$ was referred to as the noncommutative
torsion.  Set \cite{CTZZ}
\begin{eqnarray}\label{Christoffel}
\Gamma_{i j}^k =  \Gamma_{i j l} \ast  g^{l k}, \quad
\tilde\Gamma_{i j}^k = g^{k l} \ast  \tilde\Gamma_{i j l}.
\end{eqnarray}
Then we have the following result.
\begin{lemma}\label{Gamma}
\begin{eqnarray}
\nabla_i E_j= \Gamma_{i j}^k  \ast  E_k, && \tilde \nabla_i \tilde
E^j= - \tilde E^k\ast \Gamma_{k i}^j.
\end{eqnarray}
\end{lemma}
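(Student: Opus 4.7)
The plan is to unfold both covariant derivatives directly from the definitions of the canonical connections $\omega_i = -\partial_i e$ and $\tilde\omega_i = \partial_i e$, and then to identify the resulting matrix expressions with the Christoffel-type symbols by exploiting two simple facts: the duality $E_i \ast \tilde E^j = \delta_i^j$ already established in the preceding proposition, and the commutativity $\partial_i E_j = \partial_j E_i$ of second partial derivatives of $X$.

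First I would write
\[
\nabla_i E_j = \partial_i E_j - E_j \ast \partial_i e,
\]
expand $\partial_i e = \partial_i(\tilde E^k \ast E_k) = \partial_i\tilde E^k \ast E_k + \tilde E^k \ast \partial_i E_k$ by the Leibniz rule \eqref{Leibniz}, and use $E_j \ast \tilde E^k = \delta_j^k$ to see that $E_j \ast \tilde E^k \ast \partial_i E_k = \partial_i E_j$, which cancels the free $\partial_i E_j$. Differentiating the duality $E_j \ast \tilde E^k = \delta_j^k$ also gives $E_j \ast \partial_i \tilde E^k = -\partial_i E_j \ast \tilde E^k$, so the whole expression collapses to
\[
\nabla_i E_j = \partial_i E_j \ast \tilde E^k \ast E_k.
\]

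The heart of the argument is then to identify $\partial_i E_j \ast (E_l)^t$ with $\Gamma_{ijl}$. Starting from $g_{jl} = E_j \ast (E_l)^t$ and Leibniz, and using $\partial_i E_j = \partial_j E_i$ (and its transpose), the alternating combination in ${}_c\Gamma_{ijl} = \tfrac12(\partial_i g_{jl}+\partial_j g_{li}-\partial_l g_{ji})$ kills two of its six terms by pairwise cancellation, leaving
\[
{}_c\Gamma_{ijl} = \tfrac12\bigl(\partial_i E_j \ast (E_l)^t + E_l \ast \partial_i(E_j)^t\bigr).
\]
Adding the noncommutative torsion $\Upsilon_{ijl} = \tfrac12\bigl(\partial_i E_j \ast (E_l)^t - E_l \ast \partial_i(E_j)^t\bigr)$ yields exactly $\Gamma_{ijl} = \partial_i E_j \ast (E_l)^t$. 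Contracting with $g^{lk}$ and recalling $\tilde E^k = (E_l)^t \ast g^{lk}$ converts the previous display into $\nabla_i E_j = \Gamma_{ij}^k \ast E_k$.

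For the right tangent bundle the computation is parallel: $\tilde\nabla_i \tilde E^j = \partial_i \tilde E^j - \partial_i e \ast \tilde E^j$, and the same Leibniz/duality manipulation gives $\tilde\nabla_i \tilde E^j = -\tilde E^k \ast \partial_i E_k \ast \tilde E^j$. Applying $\partial_i E_k = \partial_k E_i$ before invoking the identity $\partial_a E_b \ast (E_l)^t = \Gamma_{abl}$ produces the index pattern $\Gamma_{ki}^j$ rather than $\Gamma_{ik}^j$, which is precisely what appears in the statement. The only real obstacle is bookkeeping: one must keep careful track of which factors are rows versus columns and which indices come from the symmetry of the second derivatives versus the symmetry of $g_{jl}$, since the noncommutativity of $\ast$ means one cannot freely reorder factors to disguise mistakes.
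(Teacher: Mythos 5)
Your proposal is correct and follows essentially the same route as the paper: both reduce $\nabla_i E_j$ to $\partial_i(E_j)\ast\tilde E^k\ast E_k$ via the Leibniz rule and the relations $E_j\ast e=E_j$, $E_i\ast\tilde E^j=\delta_i^j$, and then invoke $\Gamma_{ij}^k=\partial_i(E_j)\ast\tilde E^k$. The only difference is that the paper cites this last identity from an earlier reference, whereas you derive it from the definitions of ${}_c\Gamma_{ijl}$ and $\Upsilon_{ijl}$ using $\partial_i E_j=\partial_j E_i$, and you also write out the right-bundle case which the paper dismisses as ``essentially the same''; both of these added details check out.
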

\begin{proof}
Consider the first formula. Write $\partial_i e =
\partial_i(\tilde{E}^k) \ast E_k + \tilde{E}^k \ast \partial_iE_k$. We have
\[
\begin{aligned}
\nabla_i E_j &= \partial_i E_j - E_j \partial_i \ast  e \\
&= \partial_i E_j - \left(\partial_i(E_j  \ast e) - \partial_i(E_j)
 \ast e
\right) \\
&=  \partial_i(E_j) \ast \tilde{E}^k  \ast E_k.
\end{aligned}
\]
It was shown in \cite{CTZZ} that $\Gamma_{i j}^k =\partial_i(E_j)
\ast \tilde{E}^k$. This immediately leads to the first formula. The
proof for the second formula is essentially the same.
\end{proof}

Note that the Lemma \ref{Gamma} can be re-stated as
\[
\nabla_i E^j = - \tilde\Gamma_{i k}^j\ast E^k, \quad \tilde \nabla_i
(E_j)^t= (E_k)^t \ast \tilde \Gamma_{i j}^k.
\]

By using Lemma \ref{metric} and Lemma \ref{Gamma}, we can easily
prove the following result, which is equivalent to \cite[Proposition
2.7]{CTZZ}.
\begin{proposition}\label{prop:compatible}
The connections are metric compatible in the sense that
\begin{eqnarray}\label{compatible}
\partial_i \g(\zeta, \xi)=\g(\nabla_i\zeta, \xi) + \g(\zeta,
\tilde\nabla_i\xi), &\quad& \forall \zeta\in TX, \ \xi\in \tilde TX.
\end{eqnarray}
\end{proposition}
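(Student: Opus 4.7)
The plan is to deduce Proposition \ref{prop:compatible} as an essentially immediate consequence of the abstract compatibility result already established in Lemma \ref{metric}, combined with the identifications worked out earlier in this subsection. First I would record the three identifications needed. Under the idempotent $e=\tilde E^j\ast E_j$ from \eqref{idempotent}, the left tangent bundle $TX$ is the projective module $\cM=\lAm\ast e$ and $\tilde TX$ is $\tilde\cM=e\ast\rAm$. By the proposition immediately preceding, the embedded metric $\g$ from Definition \ref{def:nc-space}, once viewed as the pairing $a^i\ast E_i\otimes (E_j)^t\ast b^j\mapsto a^i\ast g_{ij}\ast b^j$, coincides with the fibre metric of Definition \ref{fibre-metric}, i.e.\ the bimodule map \eqref{form} with $\Lambda$ equal to the identity matrix. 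Finally, the connections used on $TX$ and $\tilde TX$ are $\omega_i=-\partial_i e$ and $\tilde\omega_i=\partial_i e$, which are exactly the canonical connections of Section \ref{canonical}.

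With these identifications in place, Lemma \ref{metric} asserts precisely that $\g$ is compatible with the canonical connections, and unwinding the definition of compatibility given just before Lemma \ref{compatibility} produces the identity \eqref{compatible}. So the proof amounts to invoking Lemma \ref{metric}. The substantive step, which already occurs inside the proof of Lemma \ref{metric}, is the verification that $\Lambda=I_m$ satisfies the criterion of Lemma \ref{compatibility}(\ref{form-gauge2}), namely $e\ast(\partial_i\Lambda-\omega_i\ast\Lambda+\Lambda\ast\tilde\omega_i)\ast e=0$. Here this reduces to $2\,e\ast\partial_i(e)\ast e=0$, which follows by differentiating the idempotent relation $e\ast e=e$ to obtain $\partial_i e=\partial_i(e)\ast e+e\ast\partial_i(e)$ and then sandwiching both sides between $e$ and $e$.

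If a more hands-on derivation were preferred, one could instead expand $\zeta=a^i\ast E_i$ and $\xi=(E_j)^t\ast b^j$, apply the Leibniz rule from Theorem \ref{key} to $\nabla_k\zeta$ and its analogue for $\tilde\nabla_k\xi$, and use Lemma \ref{Gamma} to substitute $\nabla_k E_i=\Gamma_{ki}^l\ast E_l$ and $\tilde\nabla_k(E_j)^t=(E_l)^t\ast\tilde\Gamma_{kj}^l$. Comparing both sides then reduces \eqref{compatible} to the Christoffel-type identity $\partial_k g_{ij}=\Gamma_{ki}^l\ast g_{lj}+g_{il}\ast\tilde\Gamma_{kj}^l$, which follows directly from the definitions of $\Gamma_{ijl}$, $\tilde\Gamma_{ijl}$, and $\Upsilon_{ijl}$: the symmetric $_c\Gamma$ pieces sum to $\partial_k g_{ij}$ while the antisymmetric $\Upsilon$ pieces cancel. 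I would expect the main potential nuisance in that route to be tracking the order of $\ast$-products when raising and lowering indices, which is precisely the headache avoided by the abstract proof above.
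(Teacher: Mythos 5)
Your primary argument --- identifying $TX$ and $\tilde TX$ with the projective modules attached to the idempotent $e=\tilde E^j\ast E_j$, the embedded metric with the fibre metric of Definition \ref{fibre-metric}, and the connections with the canonical ones, and then invoking Lemma \ref{metric} (whose substance is $e\ast\partial_i(e)\ast e=0$, obtained by differentiating $e\ast e=e$) --- is correct and is precisely the paper's proof, which simply cites Lemma \ref{metric} together with Lemma \ref{Gamma}. One small quibble with your sketched alternative route: since $g_{ij}\ne g_{ji}$ in general, the ${}_c\Gamma$ terms sum only to $\tfrac12\partial_k(g_{ij}+g_{ji})$ and the $\Upsilon$ terms do not cancel but supply the antisymmetric remainder $\tfrac12\partial_k(g_{ij}-g_{ji})$; the identity $\partial_k g_{ij}=\Gamma_{kij}+\tilde\Gamma_{kji}$ of \eqref{compatible1} still comes out correctly.
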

For $\zeta=E_j$ and $\xi=(E_k)^t$, we obtain from \eqref{compatible}
the following result for all $i, j, k$:
\begin{eqnarray}\label{compatible1}
\partial_i g_{j k} -\Gamma_{i j k} - \tilde\Gamma_{i k j} =0.
\end{eqnarray}
This formula is in fact equivalent to Proposition
\ref{prop:compatible}.

Define
\begin{eqnarray}\label{R-tensor}
R^l_{k i j} = E_k \ast \cR_{i j} \ast \tilde E^l, \quad \tilde
R^l_{k i j} =- g^{l q} \ast  E_q \ast \cR_{i j} \ast \tilde E^p \ast
g_{p k}.
\end{eqnarray}
We can show by some lengthy calculations that
\begin{eqnarray}\label{curvature1}
\begin{aligned} R_{k i j}^l &=& -\partial_j\Gamma_{i k}^l -
\Gamma_{i k}^p  \ast  \Gamma_{j p}^l + \partial_i\Gamma_{j k}^l
+\Gamma_{j k}^p  \ast \Gamma_{i p}^l ,\\
\tilde R_{k i j}^l &=& -\partial_j\tilde\Gamma_{i k}^l -
\tilde\Gamma_{j p}^l \ast \tilde\Gamma_{i k}^p
+\partial_i\tilde\Gamma_{j k}^l + \tilde\Gamma_{i p}^l
 \ast \tilde\Gamma_{j k}^p,
\end{aligned}
\end{eqnarray}
which are the {\em Riemannian curvatures} of the left and right
tangent bundles of the noncommutative space $X$ given in \cite[Lemma
2.12 and \S 4]{CTZZ}. Therefore,
\begin{eqnarray}\label{curvature}
{[\nabla_i, \nabla_j]} E_k = R_{k i j}^l  \ast E_l, & \quad &
{[}\tilde\nabla_i, \tilde\nabla_j{]}(E_k)^t = (E_l)^t  \ast \tilde
R_{k i j}^l,
\end{eqnarray}
which were proved in \cite{CTZZ}.
Let $R_{l k i j} = R_{k i j}^p\ast g_{p l}$ and $\tilde R_{l k i j}
= - g_{k p}\ast \tilde R_{l i j}^p$. By \eqref{R:leqr}, $R_{k l i
j}=\tilde R_{ k l i j}$.

\begin{definition}
Let
\begin{eqnarray}
R_{i j} = R^p_{i p j}, &\quad& R= g^{j i}\ast R_{i j},
\end{eqnarray}
and call them the {\em Ricci curvature} and {\em scalar curvature}
of the noncommutative space respectively.
\end{definition}
Then obviously
\begin{eqnarray}\label{Ricci}
R_{i j} = -\g([\nabla_j, \nabla_l]E_i, \tilde E^l), \quad R=
-\g([\nabla_i, \nabla_k] E^i, \tilde E^k).
\end{eqnarray}
\subsection{Second fundamental form}

In the theory of classical surfaces, the second fundamental form
plays an important role. A similar notion exists for embedded noncommutative
spaces.
\begin{definition}\label{sff}
We define the left and right {\em second fundamental forms} of the
noncommutative surface $X$ by
\begin{eqnarray}\label{h}
h_{i j} = \partial_i E_j -\Gamma_{i j}^k \ast E_k, && \tilde h_{i
j}= \partial_i E_j - E_k \ast \tilde \Gamma_{i j}^k.
\end{eqnarray}
\end{definition}
It follows from equation \eqref{Gamma} that
\begin{eqnarray}\label{orth}
h_{i j}\bullet E_k =0, && E_k\bullet \tilde h_{i j} =0.
\end{eqnarray}

The Riemann curvature $R_{l k i j }= (\nabla_i \nabla_j- \nabla_j
\nabla_i) E_k\bullet E_l$ can be expressed in terms of the second
fundamental forms. Note that
\begin{eqnarray*}
R_{l k i j }&=&  \partial_j  E_k\bullet \partial_i E_l -
\partial_j E_k\bullet \tilde \nabla_i E_l - \partial_i E_k\bullet
\partial_j E_l +
\partial_i E_k\bullet \tilde \nabla_j E_l.
\end{eqnarray*}
By Definition \ref{sff},
\begin{eqnarray*}
R_{l k i j }&=& \partial_j E_k\bullet \tilde h_{i l} -
\partial_i E_k\bullet \tilde h_{j l}\\
&=& (\nabla_j E_k +h_{j k})\bullet \tilde h_{i l} - (\nabla_i E_k +
h_{i k})\bullet \tilde h_{j l}.
\end{eqnarray*}
Equation \eqref{orth} immediately leads to the following result.
\begin{lemma} The following generalized Gauss
equation holds:
\begin{eqnarray}
R_{l k i j } &=& h_{j k}\bullet \tilde h_{i l} - h_{i k}\bullet
\tilde h_{j l}.
\end{eqnarray}
\end{lemma}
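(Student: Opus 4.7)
The plan is to pick up directly from the identity
\[
R_{l k i j} = (\nabla_j E_k + h_{j k})\bullet \tilde h_{i l} - (\nabla_i E_k + h_{i k})\bullet \tilde h_{j l}
\]
already established in the paragraph preceding the lemma, and to verify that the two terms involving covariant derivatives drop out. What needs to be shown is simply that $\nabla_j E_k \bullet \tilde h_{i l} = 0$ and $\nabla_i E_k \bullet \tilde h_{j l} = 0$, after which collecting the surviving terms gives precisely the stated formula.

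To see the vanishing, I would invoke Lemma \ref{Gamma}, which expresses $\nabla_j E_k = \Gamma_{j k}^p \ast E_k$ (with the correct index, $\nabla_j E_k = \Gamma_{j k}^p \ast E_p$), so that $\nabla_j E_k$ lies in the left tangent bundle $TX$ and is a left $\cA$-linear combination of the frame sections $E_p$. The orthogonality relation \eqref{orth}, $E_p \bullet \tilde h_{i l} = 0$, therefore kills $\nabla_j E_k \bullet \tilde h_{i l}$ once one uses that the pairing $\bullet$ (given by matrix multiplication, hence an $\cA$-bimodule map as in Lemma \ref{compatibility}) is left $\cA$-linear in its first slot. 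The analogous computation disposes of $\nabla_i E_k \bullet \tilde h_{j l}$.

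There is essentially no obstacle here, since the technical content sits in the identities already proved earlier. The only small point requiring care is the bookkeeping of which slot is left- or right-linear over the Moyal algebra, so that the scalar factor $\Gamma_{j k}^p$ can be extracted from the pairing before applying \eqref{orth}; this is ensured by the fact that $\bullet$ arises from the fibre metric of Definition \ref{fibre-metric}, which is an $\cA$-bimodule homomorphism.
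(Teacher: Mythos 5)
Your proposal is correct and follows exactly the route the paper takes: the displayed identity $R_{l k i j} = (\nabla_j E_k + h_{j k})\bullet \tilde h_{i l} - (\nabla_i E_k + h_{i k})\bullet \tilde h_{j l}$ is already in place, and the paper's one-line justification (``Equation \eqref{orth} immediately leads to the following result'') is precisely your observation that $\nabla_j E_k = \Gamma_{j k}^p \ast E_p$ lies in the left span of the $E_p$, so the orthogonality relation $E_p \bullet \tilde h_{i l}=0$ together with left $\cA$-linearity of the pairing kills the covariant-derivative terms. You have merely made explicit the step the paper leaves implicit; nothing further is needed.
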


\subsection{Minkowski signature}\label{Mink}

Now let us briefly comment on noncommutative spaces with Minkowski
signatures embedded in higher dimensions \cite{CTZZ}. Fix a
diagonal $m\times m$
matrix $\eta=diag(-1, \dots, -1, 1, \dots, 1)$
with $p$ of the diagonal entries being $-1$, and $q=m-p$ of
them being $1$. Given $X=\begin{pmatrix} X^1 & X^2 &\dots & X^m
\end{pmatrix}$ in $\lAm$, we define an $n\times n$ matrix $\g=(g_{i
j})_{i, j=1, 2, \dots, n}$ with entries
\[
g_{i j}=\sum_{\alpha=1}^m \partial_i X^\alpha  \ast \eta_{\alpha
\beta}  \ast \partial_j X^\beta.
\]
We call $X$ a {\em noncommutative space} embedded in $\cA^m$ if the
matrix $\g$ is invertible. Denote its inverse matrix by $(g^{i j})$.
Now the idempotent which gives rise to the left and right tangent
bundles of $X$ is given by
\[
e=  \eta (E_i)^t  \ast g^{i j} \ast  E_j,
\]
which obviously satisfies $E_i \ast  e = E_i$ for all $i$.  The
fibre metric of Definition \ref{metric} yields a metric on the
embedded noncommutative surface $X$.

\section{Elementary examples}\label{sect:examples} 

We present several simple examples of noncommutative embedded spaces
to illustrate the general theory developed in earlier sections.
We shall mainly discuss the metric, (canonical) connection and Riemannian
curvature for each noncommutative embedded space. However, in Section \ref{sect:slice},
we examine the tangent bundle as a projective module over the noncommutative algebra of
functions by explicitly constructing the corresponding idempotent.

\subsection{Noncommutative sphere} \label{sphere}

Let $U=(0, \pi)\times(0, 2\pi)$, and we write $\theta$ and $\phi$
for $t_1$ and $t_2$ respectively. Let $X(\theta, \phi)=(X^1(\theta,
\phi), X^2(\theta, \phi), X^3(\theta, \phi))$ be given by
\begin{eqnarray}
X(\theta, \phi) &=& \left(\frac{\sin\theta\cos\phi}{\cosh\barh},
\frac{\sin\theta\sin\phi}{\cosh\barh}, \frac{\sqrt{\cosh 2\barh }
\cos\theta}{\cosh\barh}\right)
\end{eqnarray}
with the components being smooth functions in $(\theta, \phi)\in U$.
It can be shown that $X$ satisfies the following relation
\begin{eqnarray}\label{eq:sphere}
X^1\ast X^1 + X^2\ast X^2 + X^3\ast X^3 =1.
\end{eqnarray}
Thus we may regard the noncommutative surface defined by $X$ as an
analogue of the sphere $S^2$. We shall denote it by $S^2_\barh$ and
refer to it as a {\em noncommutative sphere}. We have
\begin{eqnarray*}
E_1 &=&\left(\frac{\cos\theta\cos\phi}{\cosh\barh},
\frac{\cos\theta\sin\phi}{\cosh\barh}, -\frac{\sqrt{\cosh 2\barh }
\sin\theta}{\cosh\barh}\right),\\
E_2&=&\left(-\frac{\sin\theta\sin\phi}{\cosh\barh},
\frac{\sin\theta\cos\phi}{\cosh\barh}, 0\right).
\end{eqnarray*}
The components  $g _{ij}=E _i \bullet E _j$  of the metric $\g$ on
$S^2_\barh$ can now be calculated, and we obtain
\begin{eqnarray*}
\begin{aligned}
&g_{1 1} = 1 ,  \quad g_{2 2}= \sin^2\theta -
\frac{\sinh^2\barh}{\cosh^2\barh}\cos^2\theta  ,\\
&g_{1 2} =- g_{2 1}= \frac{\sinh\barh}{\cosh\barh}\left(\sin^2\theta
- \cos^2\theta\right).
\end{aligned}
\end{eqnarray*}

The components of this metric commute with one another as they
depend on $\theta$ only. Thus it makes sense to consider the usual
determinant $G$ of $\g$. We have
\begin{eqnarray*}
\begin{aligned}
G =&\sin ^2 \theta  + \tanh ^2 \barh (\cos ^2 2 \theta -\cos
^2 \theta) \\
=&\sin ^2 \theta [1 + \tanh ^2 \barh (1-4\cos ^2  \theta)].
\end{aligned}
\end{eqnarray*}
The inverse metric is given by
\begin{eqnarray*}
\begin{aligned}
&g^{1 1} = \frac{\sin ^2 \theta -\tanh^2\barh \cos ^2 \theta}
{\sin ^2 \theta +\tanh^2\barh (\cos ^2 2\theta -\cos ^2 \theta)} ,  \\
&g^{2 2}= \frac{1}{\sin ^2 \theta +\tanh^2\barh (\cos ^2 2\theta -\cos ^2 \theta)} ,\\
&g^{1 2} =- g^{2 1}=\frac{\tanh \barh \cos 2 \theta} {\sin ^2 \theta
+\tanh^2\barh (\cos ^2 2\theta -\cos ^2 \theta)}.
\end{aligned}
\end{eqnarray*}

Now we determine the connection and curvature tensor of the
noncommutative sphere. The computations are quite lengthy, thus we
only record the results here. For the Christoffel symbols, we have
\begin{eqnarray*}
\begin{aligned}
& \Gamma _{1 1 1} = \tilde\Gamma _{1 1 1}=0, &\quad& \Gamma _{1 1 2}
= -\tilde\Gamma _{1 1 2} =
\sin 2\theta \tanh \barh,  \\
& \Gamma _{1 2 1} = -\tilde\Gamma _{1 2 1}=-\sin 2\theta \tanh
\barh,
&\quad& \Gamma _{1 2 2} = \tilde\Gamma _{1 2 2} =\frac{1}{2}\sin 2\theta (1+\tanh ^2 \barh),  \\
& \Gamma _{2 1 1} = -\tilde\Gamma _{2 1 1}=\sin 2\theta \tanh \barh,
&\quad& \Gamma _{2 1 2} = \tilde\Gamma _{2 1 2} =\frac{1}{2}\sin
2\theta (1+\tanh ^2 \barh),\\
&\Gamma _{2 2 1} = \tilde\Gamma _{2 2 1} =-\frac{1}{2}\sin 2\theta
(1+\tanh ^2 \barh), &\quad& \Gamma _{2 2 2} = -\tilde\Gamma _{2 2 2}
=\sin 2\theta \tanh \barh.
\end{aligned}
\end{eqnarray*}
Note that $\Gamma _{1 1 2}\ne \tilde\Gamma _{1 1 2}$.
We now find the asymptotic expansions of the
curvature tensors with respect to $\barh$:
\begin{eqnarray*}
\begin{aligned}
R _{1 1 1 2} = &2 \barh +(\frac{10}{3} +4 \cos 2 \theta)\barh ^3+O(\barh ^4),  \\
 R _{2 1 1 2} = &-\sin ^2 \theta -\frac{1}{2}(4+\cos 2 \theta -\cos 4
 \theta)
 \barh ^2+O(\barh
 ^4),\\
R _{1 2 1 2} = &\sin ^2 \theta +\frac{1}{2}(4+\cos 2 \theta -\cos 4
\theta) \barh ^2+O(\barh
 ^4),\\
R _{2 2 1 2} = &-2\sin ^2 \theta \barh -(\frac{5}{3}+\frac{4}{3}\cos
2 \theta -4 \cos 4 \theta) \barh ^3+O(\barh ^4).
\end{aligned}
\end{eqnarray*}
We can also compute asymptotic expansions of the Ricci curvature
tensor
\begin{eqnarray*}
\begin{aligned}
 R _{1 1} = &1+(6+4\cos 2\theta)\barh ^2 +O(\barh ^4),  \\
 R _{2 1} = &(2-\cos 2\theta)\barh +\frac{1}{3}
 (16+19\cos 2 \theta -6 \cos 4\theta )\barh ^3 +O(\barh ^4),  \\
 R _{1 2} = &(2+\cos 2\theta)\barh +\frac{1}{3}
 (16+29\cos 2 \theta +6 \cos 4\theta )\barh ^3 +O(\barh ^4),  \\
 R _{2 2} = &\sin ^2 \theta +\frac{1}{2}
 (3+5\cos 2 \theta -2\cos 4 \theta)\barh ^2+O(\barh ^4),
\end{aligned}
\end{eqnarray*}
and the scalar curvature
\begin{eqnarray*}
 R  = 2+4(3+4\cos 2 \theta) \barh ^2+O(\barh
 ^4).
\end{eqnarray*}

By setting $\barh=0$, we obtain from the various curvatures of
$S^2_\barh$ the corresponding objects for the usual sphere $S^2$.
This is a useful check that our computations above are accurate.

\subsection{Noncommutative torus} \label{torus}

This time we shall take $U=(0, 2\pi)\times(0, 2\pi)$, and denote a
point in $U$ by $(\theta, \phi)$. Let $X(\theta, \phi)=(X^1(\theta,
\phi), X^2(\theta, \phi), X^3(\theta, \phi))$ be given by
\begin{eqnarray}
X(\theta, \phi) &=& \left((a+\sin\theta)\cos\phi,
(a+\sin\theta)\sin\phi,  \cos\theta\right)
\end{eqnarray}
where $a>1$ is a constant. Classically $X$ is the torus. When we
extend scalars from $\R$ to $\R[[\barh]]$ and impose the star
product on the algebra of smooth functions, $X$ gives rise to a
noncommutative torus, which will be denoted by $T ^2 _\barh$. We
have
\begin{eqnarray*}
E_1 &=&\left(\cos\theta\cos\phi, \cos\theta\sin\phi,
-\sin\theta\right),\\
E_2&=&\left(-(a+\sin\theta)\sin\phi, (a+\sin\theta)\cos\phi,
0\right).
\end{eqnarray*}
The components  $g _{ij}=E _i \bullet E _j$  of the metric $\g$ on
$T^2_\barh$ take the form
\begin{eqnarray*}
\begin{aligned}
&g_{1 1} = 1 +\sinh ^2 \barh \cos 2\theta,  \\
&g_{2 2}= (a+\cosh\barh \sin\theta) ^2 -\sinh^2\barh \cos^2\theta  ,\\
&g_{1 2} =- g_{2 1}=- \sinh\barh \cosh\barh \cos 2\theta +a \sinh
\barh \sin \theta.
\end{aligned}
\end{eqnarray*}
As they depend only on $\theta$, the components of the metric
commute with one another. The inverse metric is given by
\begin{eqnarray*}
\begin{aligned}
&g^{1 1} = \frac{(a+\cosh\barh \sin\theta) ^2 -\sinh^2\barh
\cos^2\theta}{G},  \\
&g^{2 2}= \frac{1+\sinh ^2 \barh \cos 2\theta}{G} ,\\
&g^{1 2} =- g^{2 1}=\frac{\sinh\barh \cosh\barh \cos 2\theta +a
\sinh \barh \sin \theta}{G},
\end{aligned}
\end{eqnarray*}
where $G$ is the usual determinant of $\g$ given by
\begin{eqnarray*}
G =(\sin \theta  + a\cosh \barh ) ^2-a ^2 \sin ^2 \theta \sinh ^2
\barh.
\end{eqnarray*}

Now we determine the curvature tensor of the noncommutative torus.
The computations can be carried out in much the same way as in the
case of the noncommutative sphere, and we merely record the results
here. For the connection, we have
\begin{eqnarray*}
\begin{aligned}
& \Gamma _{1 1 1} = - \sin 2\theta \sinh^2\barh, &&
\Gamma_{1 1 2} = (a \cos\theta  + \sin2\theta \cosh\barh)\sinh\barh,  \\
& \Gamma _{1 2 1} = -\sin 2\theta  \sinh \barh \cosh \barh,
&& \Gamma _{1 2 2} = a \cos \theta \cosh \barh+\frac{1}{2}\sin 2\theta \cosh 2\barh,  \\
& \Gamma _{2 1 1} = -\sin 2\theta \sinh \barh \cosh \barh, &&
\Gamma _{2 1 2} = a \cos \theta \cosh \barh +\frac{1}{2}\sin
2\theta \cosh 2 \barh,\\
&\Gamma _{2 2 1} = -(a \cos \theta  +\frac{1}{2}\sin
2\theta) \cosh \barh, && \Gamma _{2 2 2} = (2a \cos\theta  + \sin 2\theta \cosh \barh)\sinh \barh.
\end{aligned}
\end{eqnarray*}
We can find the asymptotic expansions of the curvature tensors with
respect to $\barh$:
\begin{eqnarray*}
\begin{aligned}
 R _{1 1 1 2} = &\frac{2\sin \theta (1+a \sin \theta)}{a+\sin \theta} \barh +O(\barh ^3),  \\
 R _{2 1 1 2} = &-\sin \theta (a+\sin \theta)+O(\barh ^2),\\
 R _{1 2 1 2} = &\sin \theta (a+\sin \theta)+O(\barh ^2),\\
 R _{2 2 1 2} = &-2\sin ^2 \theta (1+a \sin \theta) \barh +O(\barh ^3).
\end{aligned}
\end{eqnarray*}
We can also compute asymptotic expansions of the Ricci curvature
tensor
\begin{eqnarray*}
\begin{aligned}
 R _{1 1} = &\frac{\sin \theta}{a+\sin \theta} +O(\barh ^2),  \\
 R _{2 1} = &-\frac{\sin \theta(-3a +5a \cos \theta -(5+2a ^2)
 \sin \theta +\sin 3\theta)}{2(a+\sin \theta) ^2}\barh +O(\barh ^3),  \\
 R _{1 2} = &\frac{\sin \theta(a+\cos 2 \theta +a \sin \theta)}{a+\sin \theta}\barh +O(\barh ^3),  \\
 R _{2 2} = &\sin \theta (a+\sin \theta)+O(\barh ^2),
\end{aligned}
\end{eqnarray*}
and the scalar curvature
\begin{eqnarray*}
R  = \frac{2\sin \theta}{a+\sin \theta}+O(\barh^2).
\end{eqnarray*}
By setting $\barh=0$, we obtain from the various curvatures of
$T^2_\barh$ the corresponding objects for the usual torus $T^2$.

\subsection{Noncommutative hyperboloid} \label{hyperbolid}

Another simple example is the noncommutative analogue of the
hyperboloid described by $X= (x, y, {\sqrt{1+x^2+y^2}})$. One may
also change the parametrization and consider instead
\begin{eqnarray}
X(r, \phi) &=& \left(\sinh r \cos \phi, \sinh r \sin \phi, \cosh
r\right)
\end{eqnarray}
on $U=(0, \infty)\times(0, 2\pi)$, where a point in $U$ is denoted
by $(r, \phi)$. When we extend scalars from $\R$ to $\R[[\barh]]$
and impose the star product on the algebra of smooth functions
(of $t_1=r$ and $t_2=\phi$), $X$ gives
rise to a noncommutative hyperboloid, which will be denoted by $H ^2
_\barh$. We have
\begin{eqnarray*}
E_1 &=&\left(\cosh r \cos\phi, \cosh r \sin\phi, \sinh r\right),\\
E_2&=&\left(-\sinh r \sin\phi, \sinh r \cos\phi, 0\right).
\end{eqnarray*}
The components  $g _{ij}=E _i \bullet E _j$  of the metric $\g$ on
$H^2_\barh$ take the form
\begin{eqnarray*}
\begin{aligned}
&g_{1 1} = \cos ^2 \barh \cosh 2r,  \\
&g_{2 2}= \frac{1}{2} \left(-1+\cos 2\barh \cosh 2r\right),\\
&g_{1 2} =- g_{2 1}=- \frac{1}{2}\sin 2\barh \cosh 2r.
\end{aligned}
\end{eqnarray*}
As they depend only on $r$, the components of the metric commute
with one another. The inverse metric is given by
\begin{eqnarray*}
\begin{aligned}
&g^{1 1} =\frac{\sec^2 \barh}{2\sinh^2 r }\left(\cos 2\barh-\frac{1}{\cosh 2r} \right),\\
&g^{2 2}= \frac{1}{\sinh^2 r},\\
&g^{1 2} =- g^{2 1}=\frac{\tan\barh}{\sinh^2 r}.
\end{aligned}
\end{eqnarray*}

Now we determine the curvature tensor of the noncommutative
hyperboloid. For the connection, we have
\begin{eqnarray*}
\begin{aligned}
& \Gamma _{1 1 1} = \cos^2 \barh \sinh 2r, &\quad& \Gamma _{1 1 2} =
-\frac{1}{2}\sin 2\barh \sinh 2r,\\
& \Gamma _{1 2 1} = \frac{1}{2}\sin 2\barh \sinh 2r, &\quad& \Gamma
_{1 2 2} = \frac{1}{2}\cos 2\barh \sinh 2r,  \\
& \Gamma _{2 1 1} =\frac{1}{2}\sin 2\barh \sinh 2r, &\quad& \Gamma
_{2 1 2} = \frac{1}{2}\cos 2\barh \sinh 2r,\\
&\Gamma _{2 2 1} = -\frac{1}{2}\cos 2\barh \sinh 2r, &\quad& \Gamma
_{2 2 2} =\frac{1}{2}\sin 2\barh \sinh 2r.
\end{aligned}
\end{eqnarray*}
We can find the asymptotic expansions of the curvature tensors with
respect to $\barh$:
\begin{eqnarray*}
\begin{aligned}
 R _{1 1 1 2} = &\frac{2}{\cosh 2r}\barh +O(\barh ^2),\\
 R _{2 1 1 2} = &-\frac{\sinh ^2 r}{\cosh 2r} +O(\barh ^2),\\
 R _{1 2 1 2} = &\frac{\sinh ^2 r}{\cosh 2r} +O(\barh ^2),\\
 R _{2 2 1 2} = &-\frac{\cosh 2r +\sinh ^2 2r}{\cosh 2r}\barh +O(\barh ^3).
\end{aligned}
\end{eqnarray*}
We can also compute asymptotic expansions of the Ricci curvature
tensor
\begin{eqnarray*}
\begin{aligned}
 R _{1 1} = &\frac{1}{\cosh 2r} +O(\barh ^2),  \\
 R _{2 1} = &\frac{\coth^2 r (2\cosh 2r -1)}{\cosh 2r} \barh +O(\barh ^3),  \\
 R _{1 2} = &\frac{\cosh 2r +2}{\cosh^2 2r}\barh +O(\barh ^3),  \\
 R _{2 2} = &\frac{\sinh ^2 r}{\cosh ^2 2r}+O(\barh ^2),
\end{aligned}
\end{eqnarray*}
and the scalar curvature
\begin{eqnarray*}
R  = \frac{2}{\cosh ^2 2r}+O(\barh^2).
\end{eqnarray*}
By setting $\barh=0$, we obtain from the various curvatures of
$H^2_\barh$ the corresponding objects for the usual hyperboloid
$H^2$.

\subsection{A time slice of a quantised Schwarzschild spactime}\label{sect:slice}
We analyze an embedded noncommutative surface of Euclidean signature
arising from the quantisation of a time slice of the Schwarzschild
spacetime. While the main purpose here is to illustrate how the
general theory developed in previous sections works, the example is
interesting in its own right.

Let us first specify the notation to be used in this section. Let
$t^1=r$, \ $t^2=\theta$ and $t^3=\phi$, with $r>2m$, $\theta\in (0,
\pi)$, and $\phi\in (0, 2\pi)$. We deform the algebra of functions
in these variables by imposing the Moyal product defined by
\eqref{multiplication} with the following anti-symmetric matrix
\begin{eqnarray*}
\left(\theta _{i j}\right)_{i, j=1}^3=\left(\begin{array}{cccc}
    0&  0&  0\\
    0&  0&  1\\
    0&  -1&  0
\end{array}\right).
\end{eqnarray*}
Note that the functions depending  only on the variable $r$ are
central in the Moyal algebra $\cA$. We shall write the usual
pointwise product of two functions $f$ and $\g$ as $f g$, but write
their Moyal product as $f\ast g$.

Consider $X=\begin{pmatrix} X^1 & X^2 & X^3 & X^4\end{pmatrix}$ with
\begin{eqnarray}
\begin{aligned}
X^1=f(r) \quad \text{with} \quad (f')^2 +1=\left(1-\frac{2m}{r}\right)^{-1},\\
X^2=r \sin\theta \cos\phi,\quad X^3=r \sin\theta \sin\phi, \quad
X^4=r \cos\theta.
\end{aligned}
\end{eqnarray}
Simple calculations yield
\[
\begin{aligned}
E_1&=\partial_r X=\begin{pmatrix}f' &\sin\theta \cos\phi &\sin\theta \sin\phi &\cos\theta\end{pmatrix}, \\
E_2&=\partial_\theta X=\begin{pmatrix}0 &r\cos\theta \cos\phi
&r\cos\theta \sin\phi
&-r\sin\theta\end{pmatrix},\\
E_3&=\partial_\phi X=\begin{pmatrix}0 &-r\sin\theta \sin\phi
&r\sin\theta \cos\phi &0\end{pmatrix}.
\end{aligned}
\]
Using these formulae, we obtain the following expressions for the
components of the metric of the noncommutative surface $X$:
\begin{eqnarray}
\begin{aligned}
g_{1 1}=&\left(1-\frac{2m}{r}\right)^{-1}
\left[1-\left(1-\frac{2m}{r}\right)
          \cos(2\theta)\sinh^2\barh\right],\\
g_{1 2} =    &g_{2 1} = r\sin(2\theta)\sinh^2\barh,\\
g_{2 2} =    &r^2
    \left[1+\cos(2\theta)\sinh^2\barh\right],\\
g_{2 3} = &-g_{3 2}
=-r^2\cos(2\theta)\sinh\barh\cosh\barh,\\
g_{1 3} = &-g_{3 1}
=-r\sin(2\theta)\sinh\barh\cosh\barh,\\
g_{3 3} =&r^2
    \left[\sin^2\theta - \cos(2\theta)\sinh^2\barh\right].
\end{aligned}
\end{eqnarray}
In the limit $\barh\rightarrow 0$, we recover the spatial components
of the Schwarzschild metric.  Observe that the noncommutative
surface still reflects the characteristics of the Schwarzschild
spacetime in that there is a time slice of the Schwarzschild black
hole with the event horizon at $r = 2m$.

Since the metric $\g$ depends on $\theta$ and $r$ only, and the two
variables commute, the inverse $(g^{i j})$ of the metric can be
calculated in the usual way as in the commutative case. Now the
components of the idempotent $e=(e_{i j})= (E_i)^t * g^{i j} * E_j$
are given by the following formulae:
\begin{eqnarray*}
\begin{aligned}
e_{11} =&\frac{2m}{r} + \frac{2m(2m - r)(2 + \cos 2\theta)}{r^2} \barh^2+ O(\barh^3),\\
e_{12} =&\frac{m \cos\phi \sin\theta}{r \sqrt{\frac{m}{-4m + 2r}}}
        - \frac{2m\cos\theta\sin\phi }{r\sqrt{\frac{m}{-4m + 2r}}}
        \barh\\
        &+
  \frac{m( 4m + r + 2m\cos 2\theta ) \cos\phi \sin\theta }{r^2 \sqrt{\frac{m}{-4m + 2r}}} \barh^2+ O(\barh^3)\\
e_{13}=&\frac{m\sin\theta\sin\phi}{r\sqrt{\frac{m}{-4m + 2r}}} +
\frac{2m\cos\theta\cos\phi }{r{\sqrt{\frac{m}{-4m + 2r}}}} \barh \\
&+ \frac{m( 4m + r + 2m\cos 2\theta) \sin\theta \sin\phi }{r^2 {\sqrt{\frac{m}{-4m + 2r}}}} \barh^2 + O(\barh^3)\\
e_{14}=&\frac{m\cos\theta}{r{\sqrt{\frac{m}{-4m + 2r}}}} +
\frac{m\cos\theta( 4m - r + 2m\cos 2\theta)}
   {r^2 {\sqrt{\frac{m}{-4m + 2r}}}} \barh^2 + O(\barh^3)\\
\end{aligned}
\end{eqnarray*}
\begin{eqnarray*}
\begin{aligned}
e_{21}=&\frac{m\cos\phi\sin\theta}{r{\sqrt{\frac{m}{-4m + 2r}}}} +
\frac{2m\cos\theta \sin\phi }{r{\sqrt{\frac{m}{-4\,m +
2\,r}}}}\barh\\ &+\frac{m ( 4m + r + 2m\cos 2\theta) \cos\phi
\sin\theta }{r^2 {\sqrt{\frac{m}{-4m + 2r}}}} \barh^2
+ O(\barh^3)\\
e_{22}=&1-\frac{2 m\sin^2 \theta \cos^2 \phi }{r} \\
&+ \frac{m}{2 r^2}\Big[2r + 2m\cos 4\theta \cos^2 \phi - 6 m\cos^2
\phi \\
&+2\cos 2\theta (m + 8r + ( m - r)\cos 2\phi ) \Big] \barh^2 + O(\barh^3)\\
e_{23}=&-\frac{m \sin^2 \theta \sin 2\phi}{r} - \frac{3m\sin 2\theta }{r} \barh\\
&+ \frac{m( 2( m - r)\cos 2\theta + m( -3 + \cos 4\theta ) ) \sin
2\phi}{2r^2} \barh^2+
  O(\barh^3)\\
e_{24}=&\frac{-2m \cos\theta \cos\phi \sin\theta}{r} - \frac{m( 1 + 3\cos 2\theta)\sin\phi }{r} \barh\\
&-\frac{m ( 8 m + 5r + 4m \cos 2\theta ) \cos\phi \sin 2\theta }{2 r^2} \barh^2+ O(\barh^3)\\
\end{aligned}
\end{eqnarray*}
\begin{eqnarray*}
\begin{aligned}
e_{31}=&\frac{m\sin\theta\sin\phi}{r{\sqrt{\frac{m}{-4m + 2r}}}}
-\frac{2m\cos\theta\cos\phi }{r{\sqrt{\frac{m}{-4m + 2r}}}} \barh\\
&+   \frac{m( 4m + r + 2m\cos 2\theta) \sin\theta \sin\phi }{r^2 {\sqrt{\frac{m}{-4m + 2r}}}} \barh^2 + O(\barh^3)\\
e_{32}=&-\frac{m \sin^2 \theta \sin 2\phi}{r} + \frac{3m\sin 2\theta }{r} \barh \\
&+
  \frac{m( 2( m - r)\cos 2\theta + m( -3 + \cos 4\theta ) ) \sin 2\phi }{2r^2} \barh^2+
  O(\barh^3)\\
e_{33}=&1-\frac{2 m\sin^2 \theta \sin^2 \phi }{r} \\
&+ \frac{m}{2 r^2}\Big[ 2r + 2m\cos 4\theta \sin^2 \phi - 6m\sin^2
\phi \\
&+ 2\cos 2\theta (m + 8r - ( m - r)\cos 2\phi ) \Big] \barh^2 + O(\barh^3)\\
e_{34}=&\frac{-2m\cos\theta \sin\theta \sin\phi }{r} + \frac{m( 1 + 3\cos 2\theta ) \cos\phi }{r} \barh \\
&-
  \frac{m( 8m + 5r + 4m \cos 2\theta) \sin 2\theta \sin\phi }{2 r^2} \barh^2+ O(\barh^3)\\
\end{aligned}
\end{eqnarray*}
\begin{eqnarray*}
\begin{aligned}
e_{41}=&\frac{m\cos\theta}{r{\sqrt{\frac{m}{-4m + 2r}}}} +
\frac{m\cos\theta( 4m - r + 2m\cos 2\theta)}
   {r^2 {\sqrt{\frac{m}{-4m + 2r}}}} \barh^2 + O(\barh^3)\\
e_{42}=&\frac{-2m \cos\theta \cos\phi \sin\theta}{r} + \frac{m( 1 + 3\cos 2\theta)\sin\phi }{r} \barh\\
&-
  \frac{m ( 8 m + 5r + 4m \cos 2\theta ) \cos\phi \sin 2\theta }{2 r^2} \barh^2+ O(\barh^3)\\
e_{43}=&\frac{-2m\cos\theta \sin\theta \sin\phi }{r} -\frac{m( 1 + 3\cos 2\theta ) \cos\phi }{r} \barh \\
&-
  \frac{m( 8m + 5r + 4m \cos 2\theta) \sin 2\theta \sin\phi }{2 r^2} \barh^2 + O(\barh^3)\\
e_{44}=&1-\frac{2m \cos^2\theta}{r} + \frac{4m \cos^2 \theta ( -2m +
r - m\cos 2\theta )}{r^2} \barh^2 +O(\barh^3)
\end{aligned}
\end{eqnarray*}

Here we refrain from presenting the result of the Mathematica
computation for the curvature $\cR_{i j}=-[\partial_i e,
\partial_j e]$, which is very complicated and
not terribly illuminating. A detailed analysis of a quantised
Schwarzschild spacetime will be given in Section \ref{schwarzschild}.

\section{General coordinate transformations}\label{sect:transformations}

We now  return to the general setting of Section \ref{sect:bundles}
to investigate ``general coordinate transformations". Our treatment
follows closely \cite[\S V]{CTZZ} and makes use of general ideas of
\cite{Ge, D, Ko}. We should point out that the material presented is
part of an attempt of ours to develop a notion of ``general
covariance" in the noncommutative setting. This is an important
matter which deserves a thorough investigation. We hope that the
work presented here will prompt further studies.

Let $(\cA, \mu)$ be a Moyal algebra of smooth functions on the open
region $U$ of $\R^n$ with coordinate $t$. This algebra is defined
with respect to a constant skew symmetric matrix $\theta=(\theta_{i
j})$. Let $\Phi: U\longrightarrow U$ be a diffeomorphism of $U$ in
the classical sense. We denote \[ u^i=\Phi^i(t), \] and refer to
this as a {\em general coordinate transformation} of $U$.

Denote by $\cA_u$ the sets of smooth functions of $u=(u^1, u^2,
\dots, u^n)$. The map $\Phi$ induces an $\R[[\barh]]$-module
isomorphism $\phi=\Phi^*: \cA_u\longrightarrow \cA$ defined for any
function $f\in\cA_u$ by
\[\phi(f)(t) =f(\Phi(t)).\]
We define the $\R[[\barh]]$-bilinear map
\[
\mu_u: \cA_u\otimes \cA_u \longrightarrow \cA_u, \quad \mu_u(f,
g)= \phi^{-1} \mu_t(\phi(f), \phi(g)).
\]
Then it is well-known \cite{Ge} that $\mu_u$ is associative.
Therefore, we have the associative algebra isomorphism
\[
\phi: (\cA_u, \mu_u) \stackrel{\sim}{\longrightarrow} (\cA_t,
\mu_t).
\]
We say that the two associative algebras are {\em gauge equivalent}
by adopting the terminology of \cite{D}.

Following \cite{CTZZ}, we define $\R[[\barh]]$-linear operators
\begin{eqnarray}\label{dphi}
\partial_i^\phi := \phi^{-1}\circ \partial_i\circ \phi: \cA_u \longrightarrow
\cA_u,
\end{eqnarray}
which have the following properties \cite[Lemma 5.5]{CTZZ}:
\[
\begin{aligned}
&\partial_i^\phi \circ \partial_j^\phi - \partial_j^\phi \circ
\partial_i^\phi=0, \\
&\partial_i^\phi\mu_u(f, g) = \mu_u(\partial_i^\phi(f),  g) +
\mu_u(f, \partial_i^\phi(g)), \quad \forall f, g\in \cA_u,
\end{aligned}
\]
where the second relation is the Leibniz rule for $\partial_i^\phi$.
Recall that this Leibniz rule played a crucial role in the
construction of noncommutative spaces over $(\cA_u, \mu_u)$ in
\cite{CTZZ}.

We shall denote by $\bM_m(\cA_u)$ the set of $m\times m$-matrices
with entries in $\cA_u$. The product of two such matrices will be
defined with respect to the multiplication $\mu_u$ of the algebra
$(\cA_u, \mu_u)$. Then $\phi^{-1}$ acting component wise gives rise
to an algebra isomorphism from $\bM_m(\cA)$ to $\bM_m(\cA_u)$, where
matrix multiplication in $\bM_m(\cA)$ is defined with respect to
$\mu$.

Since we need to deal with two different algebras $(\cA, \mu)$ and
$(\cA_u, \mu_u)$ simultaneously in this section, we write $\mu$ and
the matrix multiplication defined with respect to it by $\ast$ as
before, and use $\ast_u$ to denote $\mu_u$ and the matrix
multiplication defined with respect to it.

Let $e\in\bM_m(\cA)$ be an idempotent. There exists the
corresponding finitely generated projective left (resp. right)
$\cA$-module $\cM$ (resp. $\tilde\cM$). Now $e_u:=\phi^{-1}(e)$ is
an idempotent in $\bM_m(\cA_u)$, that is,
$\phi^{-1}(e)\ast_u\phi^{-1}(e)=\phi^{-1}(e)$. Write
$e_u=(\cE_\alpha^\beta)_{\alpha, \beta=1, \dots, m}$. This
idempotent gives rises to the left projective $\cA_u$-module $\cM_u$
and right projective $\cA_u$-module $\tilde\cM_u$, respectively
defined by
\begin{eqnarray*}
&\cM_u = \left\{\left.\begin{pmatrix}a^\alpha \ast_u\cE_\alpha^1&
a^\alpha \ast_u\cE_\alpha^2 & \dots & a^\alpha \ast_u\cE_\alpha^m
\end{pmatrix}\right| a^\alpha\in \cA_u\right\},\\
&\tilde\cM_u = \left\{\left.\begin{pmatrix}
\cE_1^\beta \ast_u b_\beta \\ \cE_2^\beta\ast_u b_\beta\\
\vdots \\ \cE_m^\beta \ast_u b_\beta
\end{pmatrix}\right| b^\beta\in \cA_u\right\},
\end{eqnarray*}
where $a^\alpha \ast_u \cE_\alpha^\beta = \sum_{\alpha}
\mu_u(a^\alpha, \cE_\alpha^\beta)$ and $\cE_\alpha^\beta\ast_u
b_\beta =\sum_{\beta} \mu_u(\cE_\alpha^\beta, b_\beta)$. Below we
consider the left projective module only, as the right projective
module may be treated similarly.

Assume that we have the left connection
\[
\nabla_i: \cM\longrightarrow \cM, \quad \nabla_i\zeta
=\frac{\partial\zeta}{\partial t^i} + \zeta\ast\omega_i.
\]
Let $\omega_i^u:=\phi^{-1}(\omega_i)$. We have the following result.
\begin{theorem} \label{covariance}
\begin{enumerate}
\item The matrices $\omega_i^u$ satisfy the following relations in
$\bM_m(\cA_u)$:
\[ e_u\ast_u \omega_i^u \ast_u (1- e_u) = - e_u \ast_u  \partial_i^\phi e_u.\]
\item The operators $\nabla^\phi_i$  ($i=1, 2, \dots, n$) defined for all $\eta\in\cM_u$ by
\[ \nabla^\phi_i\eta =\partial_i^\phi\eta + \eta\ast_u \omega_i^u\]
give rise to a connection on $\cM_u$.
\item The curvature of the connection $\nabla^\phi_i$ is given by
\[\cR_{i j}^u = \partial_i^\phi\omega_j^u- \partial_j^\phi\omega_i^u
-\omega_i^u\ast_u \omega_j^u+ \omega_j^u\ast_u \omega_i^u,\] which
is related to the curvature $\cR_{i j}$ of $\cM$ by
\[\cR_{i j}^u = \phi^{-1}(\cR_{i j}).\]
\end{enumerate}
\end{theorem}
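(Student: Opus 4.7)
The entire theorem hinges on one structural fact: $\phi$ is an $\R[[\barh]]$-algebra isomorphism from $(\cA_u,\mu_u)$ to $(\cA,\mu)$, and by definition $\partial_i^\phi = \phi^{-1}\circ\partial_i\circ\phi$. Extended component-wise, $\phi^{-1}$ also intertwines matrix multiplication: $\phi^{-1}(A\ast B) = \phi^{-1}(A)\ast_u\phi^{-1}(B)$ for $A,B\in\bM_m(\cA)$. My plan is simply to transport every identity from the $t$-side to the $u$-side via $\phi^{-1}$.

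For part (1), I would start from the hypothesis that $\omega_i$ is a connection on $\cM$, which by \eqref{connect-property} means $e\ast\omega_i\ast(1-e) = -e\ast\partial_i e$. Applying $\phi^{-1}$ to both sides and using that $\phi^{-1}(e)=e_u$, $\phi^{-1}(\omega_i)=\omega_i^u$, and $\phi^{-1}(\partial_i e)=\phi^{-1}(\partial_i\phi(e_u))=\partial_i^\phi e_u$, together with the fact that $\phi^{-1}(1)=1$, immediately yields
\[
e_u\ast_u\omega_i^u\ast_u(1-e_u) = -e_u\ast_u\partial_i^\phi e_u.
\]

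For part (2), I would invoke Theorem \ref{key} directly, but for the algebra $(\cA_u,\mu_u)$ equipped with the mutually commuting derivations $\partial_i^\phi$ (these satisfy the Leibniz rule with respect to $\ast_u$ by \cite[Lemma 5.5]{CTZZ}, recalled above the theorem). The idempotent is $e_u$, the corresponding left projective module is $\cM_u$, and part (1) shows the $\omega_i^u$ satisfy the connection condition for this setup. Hence Theorem \ref{key} applies verbatim and gives $\nabla_i^\phi\eta\in\cM_u$ together with the Leibniz rule $\nabla_i^\phi(a\ast_u\eta) = \partial_i^\phi(a)\ast_u\eta + a\ast_u\nabla_i^\phi\eta$ for all $\eta\in\cM_u$, $a\in\cA_u$.

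For part (3), the displayed formula for $\cR_{ij}^u$ is just the definition of curvature attached to the connection $\omega_i^u$ in the algebra $(\cA_u,\mu_u)$, so nothing needs to be proved there. For the identification $\cR_{ij}^u = \phi^{-1}(\cR_{ij})$, I would apply $\phi^{-1}$ term by term to
\[
\cR_{ij} = \partial_i\omega_j - \partial_j\omega_i - \omega_i\ast\omega_j + \omega_j\ast\omega_i.
\]
Each of the four summands converts cleanly: $\phi^{-1}(\partial_i\omega_j) = \partial_i^\phi\phi^{-1}(\omega_j) = \partial_i^\phi\omega_j^u$ by the very definition \eqref{dphi}, while $\phi^{-1}(\omega_i\ast\omega_j) = \omega_i^u\ast_u\omega_j^u$ because $\phi$ is a matrix algebra isomorphism. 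Combining yields the desired formula.

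The only step requiring care is the bookkeeping in part (3): one must be sure that $\phi^{-1}$ really does intertwine the matrix products $\ast$ and $\ast_u$ entry-wise, which follows because the entries are Moyal-multiplied and $\phi\colon(\cA_u,\mu_u)\to(\cA,\mu)$ is an algebra isomorphism. Beyond that the argument is formal; the theorem is essentially a statement that Section \ref{sect:bundles} is natural with respect to gauge-equivalent deformations, and the whole proof is a single pullback through $\phi^{-1}$.
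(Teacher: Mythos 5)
Your proposal is correct and follows essentially the same route as the paper's own proof: part (1) by applying the algebra isomorphism $\phi^{-1}$ (extended entrywise to matrices) to the defining relation \eqref{connect-property} and using $\partial_i^\phi e_u = \phi^{-1}(\partial_i e)$, part (2) by rerunning the argument of Theorem \ref{key} in $(\cA_u,\mu_u)$ with the derivations $\partial_i^\phi$, and part (3) by transporting the curvature formula term by term. The paper is merely terser at each step; no substantive difference.
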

\begin{proof}Note that $e_u \ast_u\omega_i^u \ast_u(1- e_u)
=\phi^{-1}(e\ast\omega_i\ast(1-e))$.
We also have $\partial_i^\phi e_u= \phi^{-1}(\frac{\partial
e}{\partial t^i})$, which leads to $e_u\ast_u \partial_i^\phi
e_u=\phi^{-1}(e\ast \phi(\partial_i^\phi e_u))=
\phi^{-1}(e\ast\partial_i e)$. This proves part (1). Part (2)
follows from part (1) and the Leibniz rule for $\partial_i^\phi$.
Straightforward calculations show that the curvature of the
connection $\nabla^\phi_i$ is given by $\cR_{i j}^u
=\partial_i^\phi\omega_j^u- \partial_j^\phi\omega_i^u
-\omega_i^u\ast_u\omega_j^u +  \omega_j^u\ast_u\omega_i^u$. Now
$\partial_i^\phi\omega_j^u = \phi^{-1}(\frac{\partial
\omega_j}{\partial t^i})$, and
$\omega_i^u\ast_u\omega_j^u-\omega_j^u\ast_u\omega_i^u=\phi^{-1}(\omega_i\ast\omega_j)
-\phi^{-1}(\omega_j\ast\omega_i)$. Hence $\cR_{i j}^u =
\phi^{-1}(\cR_{i j})$.
\end{proof}

\begin{remark}\label{classical}
One can recover the usual transformation rules of tensors
under the diffeomorphism group from the commutative limit of Theorem  \ref{covariance}
in a way similar to that in \cite[\S 5.C]{CTZZ}.
\end{remark}

\section{Noncommutative Einstein field equations and exact solutions}\label{sect:Einstein}
\subsection{Noncommutative Einstein field equations}
Recall that in classical Riemannian geometry, the second Bianchi
identity suggests the correct form of Einstein's equation. Let us
make some analysis of this point here.

In Section \ref{sect:surfaces},  we introduced the Ricci curvature
$R_{i j}$ and scalar curvature $R$. Let
\[
R^i_j = g^{i k}\ast R_{k j},
\]
then the scalar curvature is $R=R^i_i$. Let us also introduce the
following object:
\begin{eqnarray}\label{Theta}
\Theta^l_p :=\g([\nabla_p, \nabla_i]E^i,\,  \tilde E^l)= g^{i k}\ast
R^l_{k p i}.
\end{eqnarray}
In the commutative case, $\Theta^l_p$ coincides with $R^l_p$, but
it is no longer true in the present setting. However, note that
\begin{eqnarray}\label{Scalar2}
\Theta^l_l =g ^{ik} \ast R ^l _{kli} = g ^{ik} \ast R _{ki} =R.
\end{eqnarray}

By first contracting the indices $j$ and $l$ in the second Bianchi
identity, then raising the index $k$ to $i$ by multiplying the
resulting identity by $g^{i k}$ from the left and summing over $i$,
we obtain the identity
\begin{eqnarray*}
\begin{aligned}
0&=\partial_p R & - &\partial_i R^i_p&+& \g\left([\nabla_i,
\nabla_l]\nabla_p E^i, \tilde E^l \right)&+&\g\left([\nabla_l,
\nabla_p]\nabla_i E^i, \tilde E^l \right)\\
& &-&\partial_l \Theta^l_p&+& \g\left([\nabla_i, \nabla_l]E^i,
\tilde \nabla_p \tilde E^l \right)&+& \g\left([\nabla_p,
\nabla_i]E^i, \tilde \nabla_l \tilde E^l \right)\\
& && &+&\g\left([\nabla_p, \nabla_i]\nabla_l E^i, \tilde E^l
\right)&+& \g\left([\nabla_l, \nabla_p]E^i, \tilde \nabla_i \tilde
E^l \right).
\end{aligned}
\end{eqnarray*}
Let us denote the sum of the last two terms on the right-hand side
by $\varpi_p$. Then
\begin{eqnarray*}
\varpi_p&=& g^{i k}\ast R^r_{k p l}\ast \Gamma^l_{r
i}-\tilde\Gamma^i_{l r}\ast g^{r k}\ast R^l_{k p i}.
\end{eqnarray*}
In the commutative case, $\varpi_p$ vanishes identically for all
$p$. However in the noncommutative setting, there is no reason to
expect this to happen. Let us now define
\begin{eqnarray}
\begin{aligned}
R^i_{p;i} &=& \partial_i R^i_p - \tilde\Gamma^i_{p r}\ast R^r_i
+ \tilde\Gamma^i_{i r}\ast R^r_p, \\
\Theta^l_{p;l} &=& \partial_l \Theta^l_p -
\Theta^r_l\ast\Gamma^l_{r p}+ \Theta^r_p\ast\Gamma^l_{r
l}&-\varpi_p.
\end{aligned}
\end{eqnarray}
Then the second Bianchi identity implies
\begin{eqnarray}\label{divergence}
R^i_{p;i} + \Theta^i_{p;i} - \partial_p R =0.
\end{eqnarray}

The above discussions suggest that Einstein's equation no longer
takes its usual form in the noncommutative setting. Instead,
formulae \eqref{divergence} and \eqref{Scalar2} suggest that
the following is a reasonable proposal for a noncommutative Einstein
equation in the vacuum:
\begin{eqnarray} \label{vacuum}
R^i_j + \Theta^i_j- \delta^i_j R &=&0.
\end{eqnarray}

We have not been able to formulate a basic principle which enables us to {\em
derive} \eqref{Einstein}. However, in the next section,
we shall solve this equation to obtain a class of exact
solutions. The existence of such solutions is evidence that it
is a meaningful candidate for a noncommutative Einstein field equation.

We may extend \eqref{vacuum} to include
matter and dark energy. We propose the following equation,
\begin{eqnarray} \label{Einstein}
R^i_j + \Theta^i_j- \delta^i_j R +2\delta ^i _j \Lambda =2T^i _j,
\end{eqnarray}
where $T^i _j$ is some generalized ``energy-momentum tensor",  and
$\Lambda$ is the cosmological constant. This reduces to the vacuum
equation \eqref{vacuum} when $T^i_j=0$ and the
cosmological constant vanishes. We hope to provide a mathematical
justification for this proposal in future work, where the defining
properties of $T^i _j$ will also be specified.

\subsection{Exact solutions in the vacuum}

We now construct a class of exact solutions of the
noncommutative vacuum Einstein field equations. The solutions are
quantum deformed analogues of plane-fronted gravitational waves
\cite{Br, ER, Ro, EK, Pe, Col}.

Let $(\theta _{ij})$ be an arbitrary constant skew symmetric $4
\times 4$ matrix, and endow the space of functions of the variables
$(x, y, u , v)$ with the Moyal product defined with respect to
$(\theta _{ij})$. We denote the resulting noncommutative algebra by
$\cA$.

Now we consider a noncommutative space $X$ embedded in $\cA^6$ by a
map of the form
\begin{eqnarray}\label{embedding}
X =\left(x, y, \frac{Hu+u+v}{\sqrt{2}},
\frac{H-\frac{u^2}{2}}{\sqrt{2}}, \frac{H u-u+v}{\sqrt{2}},
\frac{H-\frac{u^2}{2}}{\sqrt{2}}\right),
\end{eqnarray}
where, needless to say, the component functions are elements of
$\cA$. Here $H$ is an unknown function, which we shall determine by requiring
noncommutative space to be Einstein.

Let us take $\eta=diag(1, 1, 1, 1, -1, -1)$, and construct the
noncommutative metric $\g$ by using the formula \eqref{metric} for
this embedded noncommutative space. Denote
\[
\begin{aligned}
A=\theta _{yu} H_{xy}+\theta _{xu} H_{xx},
\quad B=\theta _{yu} H_{yy}+\theta _{xu} H_{xy},
\end{aligned}
\]
where $H_{x y}$ etc. are second order partial derivatives of $H$,
and $\theta_{x y}$ etc. refer to components of the matrix $\theta$.
A very lengthy calculation yields the following result for the
noncommutative metric:
\begin{eqnarray*}
\begin{aligned}
\g=&
\begin{pmatrix}
 1 & 0 & -\barh A  & 0 \\
 0 & 1 & -\barh B & 0 \\
 \barh A &
 \barh B & 2 H & 1 \\
 0 & 0 & 1 & 0
\end{pmatrix}.
\end{aligned}
\end{eqnarray*}
It is useful to note that in the classical limit with all $\theta_{i
j}=0$,  the matrix diagonalises to $diag(1, 1, H+\sqrt{1+H^2},
H-\sqrt{1+H^2})$, thus has Minkowski signature. Further tedious
computations produce the following inverse metric:
\begin{eqnarray*}
\begin{aligned}
\g^{-1}=
\begin{pmatrix}
 1 & 0 & 0 & \barh A  \\
 0 & 1 & 0 & \barh B \\
 0 & 0 & 0 & 1\\
 -\barh A  &
 -\barh B & 1 & g^{44}
\end{pmatrix}
\end{aligned}
\quad\text{with}\quad g^{44}=-\barh^2 (B\ast B + A\ast A) -2 H.
\end{eqnarray*}
Using these formulae we can compute $\Gamma _{ijk}$ and $\Gamma^k
_{ij}$, the nonzero components of which are given below:
\begin{eqnarray*}
\begin{aligned}
\Gamma_{113}=&-\barh(\theta _{yu} H_{xxy}+\theta _{xu}H_{xxx}),\\
\Gamma_{123}=&\Gamma_{213}= -\barh (\theta _{yu} H_{xyy}+\theta _{xu} H_{xxy}),\\
\Gamma_{133}=&\Gamma_{313}= H_x-\barh (\theta _{yu} H_{xyu}+\theta _{xu} H_{xxu}),\\
\Gamma_{223}=&-\barh (\theta _{yu} H_{yyy}+\theta _{xu} H_{xyy}),\\
\Gamma_{233}=&\Gamma_{323}=H_y-\barh (\theta _{yu} H_{yyu}+\theta _{xu} H_{xyu})\\
\Gamma_{331}=&-H_x, \quad \Gamma_{332}=-H_y,\\
\Gamma_{333}=&H_u-\barh(\theta _{yu} H_{yuu}+\theta_{xu} H_{xuu});\\
\Gamma_{11}^4=&-\barh (\theta _{yu} H_{xxy}+\theta _{xu}H_{xxx}),\\
\Gamma_{12}^4=&\Gamma_{21}^4= -\barh (\theta _{yu} H_{xyy}+\theta _{xu} H_{xxy})\\
\Gamma_{13}^4=&\Gamma_{31}^4=H_x-\barh (\theta _{yu} H_{xyu}+\theta _{xu} H_{xxu})\\
\Gamma_{22}^4=&-\barh(\theta _{yu} H_{yyy}+\theta _{xu} H_{xyy}),\\
\Gamma_{23}^4=&\Gamma_{32}^4=H_y-\barh (\theta _{yu} H_{yyu}+\theta _{xu} H_{xyu}),\\
\Gamma_{33}^1=&-H_x,\quad
\Gamma_{33}^2=-H_y,\\
\Gamma_{33}^4=&-H_x \ast \barh (\theta _{yu} H_{xy}+\theta_{xu} H_{xx})
-H_y \ast\barh (\theta _{yu}H_{yy}+\theta _{xu} \ H_{xy})\\
&+H_u-\barh (\theta _{yu} H_{yuu}+\theta _{xu}H_{xuu}).
\end{aligned}
\end{eqnarray*}

Remarkably, explicit formulae for curvatures can also be obtained,
even though the noncommutativity of the $\ast$-product complicates
the computations enormously. We have
\begin{eqnarray*}
\begin{array}{lll}
&R_{1313}=-R_{1331}=-H_{xx}, & R_{1323}=-R_{1332}=-H_{xy},\\
&R_{2313}=-R_{2331}=-H_{xy}, & R_{2323}=-R_{2332}=-H_{yy},\\
&R_{3113}=-R_{3131}=H_{xx},  & R_{3123}=-R_{3132}=H_{xy},\\
&R_{3213}=-R_{3231}=H_{xy},  & R_{3223}=-R_{3232}=H_{yy},\\
&R_{3331}=-R_{3313}=0,       & R_{3332}=-R_{3323}=0.
\end{array}
\end{eqnarray*}

Thus the nonzero components of $R^l _{ijk}$ are
\begin{eqnarray*}
\begin{array}{lll}
&R_{113}^4=-R_{131}^4=H_{xx}, \;\;\;\;\;\;\;\;R_{123}^4=-R_{132}^4=H_{xy},\\
&R_{213}^4=-R_{231}^4=H_{xy}, \;\;\;\;\;\;\;\;R_{223}^4=-R_{232}^4=H_{yy},\\
&R_{313}^1=-R_{331}^1=-H_{xx},\;\;\;\;\;R_{313}^2=-R_{331}^2=-H_{xy},\\
&R_{323}^1=-R_{332}^1=-H_{xy},\;\;\;\;\;R_{323}^2=-R_{332}^2=-H_{yy},\\
&R_{313}^4=-R_{331}^4=-H_{xx}\ast\barh (\theta _{yu}H_{xy}+\theta _{xu} H_{xx})\\
&\;\;\;\;\;\;\;\;\;\;\;\;\;\;\;\;\;\;\;\;\;\;\;\;\;\;
-H_{xy}\ast\barh (\theta_{yu} H_{yy}+\theta _{xu} H_{xy}),\\
&R_{323}^4=-R_{332}^4=-H_{xy}\ast\barh (\theta _{yu}H_{xy}+\theta _{xu} H_{xx})\\
&\;\;\;\;\;\;\;\;\;\;\;\;\;\;\;\;\;\;\;\;\;\;\;\;\;\;
-H_{yy}\ast\barh (\theta _{yu} H_{yy}+\theta _{xu}H_{xy}).
\end{array}
\end{eqnarray*}
From these formulae, we obtain the nonzero components of the Ricci
curvature:
\begin{eqnarray}\label{Equation}
R_3^4=\Theta_3^4=-H_{xx}-H_{yy}.
\end{eqnarray}

Thus the noncommutative vacuum Einstein field equations
(\ref{vacuum}) are satisfied if and only if the following equation
holds:
\begin{eqnarray}
H _{xx} +H _{yy} =0. \label{vacuum-pp}
\end{eqnarray}

Solutions of this linear equation for $H$ exist in abundance. Each
solution leads to an exact solution of the noncommutative vacuum
Einstein field equations. If we set $\theta$ to zero, we recover
from such a solution the plane-fronted gravitational wave \cite{Br,
ER, Ro, EK} in classical general relativity. Thus we shall call such
a solution of \eqref{vacuum} a {\em plane-fronted noncommutative
gravitational wave}.

It is clear from \eqref{Equation} that plane-fronted noncommutative
gravitational waves satisfy the additivity property. Explicitly, if
the noncommutative metrics of
\begin{eqnarray*}
X^{(i)} =\left(x, y, \frac{H_iu+u+v}{\sqrt{2}},
\frac{H_i-\frac{u^2}{2}}{\sqrt{2}}, \frac{H_iu-u+v}{\sqrt{2}},
\frac{H_i-\frac{u^2}{2}}{\sqrt{2}}\right), \quad i=1, 2,
\end{eqnarray*}
are plane-fronted noncommutative gravitational waves, we let
$H=H_1+H_2$, and set
\begin{eqnarray*}
X=\left(x, y, \frac{Hu+u+v}{\sqrt{2}},
\frac{H-\frac{u^2}{2}}{\sqrt{2}}, \frac{H u-u+v}{\sqrt{2}},
\frac{H-\frac{u^2}{2}}{\sqrt{2}}\right).
\end{eqnarray*}
Then the noncommutative metric of $X$ is also a plane-fronted
noncommutative gravitational wave. This is a rather nontrivial fact
since the noncommutative Einstein field equations are highly
nonlinear in $\g$, and it is extremely rare to have this additivity
property.

At this point, it is appropriate to point out that the embedding
\eqref{embedding} is only used as a device for constructing the
metric and the connection, from which the curvatures are derived.
However, we should observe the power of embeddings in solving the
noncommutative Einstein field equations. Without using the embedding
\eqref{embedding}, it would be very difficult to come up with
elegant solutions like what we have obtained here.

\section{Quantum spacetimes}\label{sect:spacetime}

In this section, we consider quantisations of several well known spacetimes.
We first find a global embedding of a spacetime into some pseudo-Euclidean
space, whose existence is guaranteed by theorems of Nash, Clarke and
Greene \cite{N, C, Gr}. Then we quantise the spacetime following the
strategy of deformation quantisation \cite{BFFLS, Ko} by deforming
\cite{Ge} the algebra of functions in the pseudo-Euclidean space to
the Moyal algebra. Through this mechanism, classical spacetime
metrics will deform to ``quantum'' noncommutative metrics which
acquire quantum fluctuations. In particular, certain anti-symmetric
components arise in the deformed metrics, which involve the Planck
constant and vanish in the classical limit.

\subsection{Quantum deformation of the Schwarzschild spacetime}\label{schwarzschild}

In this section, we investigate noncommutative analogues of the
Schwarzschild spacetime using the general theory discussed in
previous sections.  Recall that the Schwarzschild spacetime has the
following metric
\begin{eqnarray}\label{Sch}
d s^2=-\left(1-\frac{2m}{r}\right)dt^2
+\left(1-\frac{2m}{r}\right)^{-1}dr^2 +r^2\left(d\theta ^2 +\sin^2
\theta d\phi ^2 \right)
\end{eqnarray}
where $m=\frac{2 G M}{c^2}$ is constant, with $M$ interpreted as the
total mass of the spacetime. In the formula for $m$, $G$ is the
Newton constant, and $c$ is the speed of light. The Schwarzschild
spacetime can be embedded into a flat space of 6-dimensions in the
following two ways \cite{K1, K2, F}:

\medskip
\noindent(i). {Kasner's embedding:}
\begin{eqnarray*}
\begin{aligned}
& X^1=\left(1-\frac{2m}{r}\right)^\frac{1}{2} \sin t,\quad
X^2=\left(1-\frac{2m}{r}\right)^\frac{1}{2} \cos t,\\
&X^3=f(r),\quad (f')^2 +1=\left(1-\frac{2m}{r}\right)^{-1}
\left(1+\frac{m^2}{r^4}\right),\\
&X^4=r \sin\theta \cos\phi,\quad X^5=r \sin\theta \sin\phi,\quad
X^6=r \cos\theta,
\end{aligned}
\end{eqnarray*}
with the Schwarzschild metric given by
\begin{eqnarray*}
ds^2=-\big(dX^1\big)^2-\big(dX^2\big)^2+\big(dX^3\big)^2+\big(dX^4\big)^2
+\big(dX^5\big)^2+\big(dX^6\big)^2
\end{eqnarray*}

\medskip
\noindent(ii). {Fronsdal's embedding:}
\begin{eqnarray*}
\begin{aligned}
&Y^1=\left(1-\frac{2m}{r}\right)^\frac{1}{2} \sinh t,\quad
Y^2=\left(1-\frac{2m}{r}\right)^\frac{1}{2} \cosh t,\\
&Y^3=f(r),\quad (f')^2 +1=\left(1-\frac{2m}{r}\right)^{-1}
\left(1-\frac{m^2}{r^4}\right),\\
&Y^4=r \sin\theta \cos\phi,\quad Y^5=r \sin\theta \sin\phi,\quad
Y^6=r \cos\theta,
\end{aligned}
\end{eqnarray*}
with the Schwarzschild metric given by
\begin{eqnarray*}
ds^2=-\big(dY^1\big)^2+\big(dY^2\big)^2+\big(dY^3\big)^2+\big(dY^4\big)^2
+\big(dY^5\big)^2+\big(dY^6\big)^2.
\end{eqnarray*}

Let us now construct a noncommutative analogue of the Schwarzschild
spacetime. Denote $x^0=t$, $x^1=r$, $x^2=\theta$ and $x^3=\phi$. We
deform the algebra of functions in these variables by imposing on it
the Moyal product defined by \eqref{multiplication} with the
following anti-symmetric matrix
\begin{eqnarray}\label{asy-matrix}
\left(\theta _{\mu \nu}\right)_{\mu,
\nu=0}^3=\left(\begin{array}{cccc}
   0&  0&  0&  0\\
   0&  0&  0&  0\\
   0&  0&  0&  1\\
   0&  0&  -1&  0
\end{array}\right).
\end{eqnarray}
Denote the resultant noncommutative algebra by $\cA$. Note that in
the present case that the nonzero components of the matrix
$\left(\theta _{\mu \nu}\right)$ are dimensionless.

Now we regard the functions $X^i$, $Y^i$  $(1\le i\le 6)$ appearing
in both Kasner's and Fronsdal's embeddings as elements of $\cA$. For
$\mu=0, 1, 2, 3$, and $i=1, 2, \dots, 6$, let
\begin{eqnarray}\label{veibein}
\begin{aligned}
E^i_\mu&=& \frac{\partial X^i}{\partial x^\mu},
&\quad \text{for Kasner's embedding}, \\
E^i_\mu&=& \frac{\partial Y^i}{\partial x^\mu},
&\quad \text{for
Fronsdal's embedding}.
\end{aligned}
\end{eqnarray}
Following the general theory of the last section, we define the
metric and noncommutative torsion for the noncommutative
Schwarzschild spacetime by,

\medskip
\noindent (1). in the case of Kasner's embedding
\begin{eqnarray}\label{Kasner-case}
\begin{aligned}
g_{\mu \nu}=& -E^1_\mu\ast E^1_\nu - E^2_\mu\ast E^2_\nu +
\sum_{j=3}^6 E^j_\mu\ast E^j_\nu,\\
\Upsilon_{\mu\nu\rho} =& \frac{1}{2}\left(-\partial_\mu E^1_\nu\ast
E^1_\rho -  \partial_\mu E^2_\nu\ast E^2_\rho +\sum_{j=3}^6
\partial_\mu E^j_\nu\ast
E^j_\rho \right)\\
&+\frac{1}{2}\left(-E^1_\rho\ast  \partial_\mu E^1_\nu -E^2_\rho\ast
\partial_\mu E^2_\nu+\sum_{j=3}^6
E^j_\rho  \ast \partial_\mu E^j_\nu\right);
\end{aligned}
\end{eqnarray}
\noindent (2). in the case of Fronsdal's embedding
\begin{eqnarray}\label{Fronsdal-case}
\begin{aligned}
g_{\mu \nu}=& -E^1_\mu\ast E^1_\nu + \sum_{j=2}^6 E^j_\mu\ast
E^j_\nu, \\
\Upsilon_{\mu\nu\rho} =& \frac{1}{2}\left(-\partial_\mu E^1_\nu\ast
E^1_\rho +\sum_{j=2}^6
\partial_\mu E^j_\nu\ast
E^j_\rho \right)\\
&+\frac{1}{2}\left(-E^1_\rho\ast  \partial_\mu E^1_\nu +\sum_{j=2}^6
E^j_\rho  \ast \partial_\mu E^j_\nu\right).
\end{aligned}
\end{eqnarray}

Some lengthy but straightforward calculations show that the metrics
and the noncommutative torsions are respectively equal in the two
cases. Since the noncommutative torsion will not be used in later
discussions, we shall not spell it out explicitly. However, we
record the metric $\g=(g_{\mu \nu})$ of the quantum deformation of
the Schwarzschild spacetime below:
\begin{eqnarray}\label{deformed-Sch}
\begin{aligned}
g_{0 0} =&-\left(1-\frac{2m}{r}\right),\\
g_{0 1} =&g_{1 0}=g_{0 2} =g_{2 0}=g_{0 3} =g_{3 0}=0, \\
g_{1 1}=&\left(1-\frac{2m}{r}\right)^{-1}
\left[1+\left(1-\frac{2m}{r}\right)
          \left(\sin^2\theta -\cos^2 \theta\right)\sinh^2\barh\right],\\
g_{1 2} =    &g_{2 1} = 2r\sin\theta\cos\theta\sinh^2\barh,\\
g_{1 3} =    &-g_{3 1} = -2r\sin\theta\cos\theta\sinh\barh\cosh\barh,\\
g_{2 2} =    &r^2
    \left[1-\left(\sin^2\theta-\cos^2\theta\right)\sinh^2\barh\right],\\
g_{2 3} = &-g_{3 2}
=r^2\left(\sin^2\theta-\cos^2\theta\right)\sinh\barh\cosh\barh,\\
g_{3 3} =&r^2
    \left[\sin^2\theta+\left(\sin^2\theta-\cos^2\theta\right)\sinh^2\barh\right].
\end{aligned}
\end{eqnarray}

It is interesting to observe that the quantum deformation of the
Schwarzschild metric (\ref{deformed-Sch}) still has a black hole
with the event horizon at $r=2m$. The Hawking temperature and
entropy of the black hole are respectively given by
\begin{eqnarray*}
\begin{aligned}
T=\frac{1}{2}\frac{dg_{00}}{dr}\Big|_{r=2m}=\frac{1}{4m},\qquad
S_{bh}=4\pi m^2.
\end{aligned}
\end{eqnarray*}
They coincide with the temperature and entropy of the classical
Schwarzschild black hole of mass $M$. However, the area of the event
horizon of the noncommutative black hole receives corrections from
the quantum deformation of the spacetime. Let
$\bar{g}=\begin{pmatrix}g_{2 2} & g_{2 3}\\ g_{3 2} & g_{3 3}
\end{pmatrix}.$ We have
\begin{eqnarray*}
\begin{aligned}
A=&\iint _{\{r=2m\}} \sqrt{\det\bar{g}}\, d\theta d\phi\\
=&\iint _{\{r=2m\}} r^2 \sin\theta \sqrt{1+\left(\sin^2\theta
-\cos^2\theta\right)\sinh^2 \barh}d\theta d\phi\\
=&16\pi m^2 \left(1-\frac{\barh^2}{6} +O(\barh^4)\right).
\end{aligned}
\end{eqnarray*}
This leads to the following relationship between the horizon area
and entropy of the noncommutative black hole:
\begin{eqnarray}\label{entropy}
S_{bh} =\frac{A}{4}\left(1+\frac{\barh^2}{6} +O(\barh^4)\right).
\end{eqnarray}

Let us now consider the Ricci and $\Theta$-curvature of the deformed
Schwarzschild metric. We have
\begin{eqnarray*}\label{RTheta}
\begin{aligned}
R_0^1=&R_0^2=R_0^3=R_1^0=R_2^0=R_3^0=0,\\
\Theta_0^1=&\Theta_0^2=\Theta_0^3=\Theta_1^0=\Theta_2^0=\Theta_3^0=0,\\
R_0 ^0 =&\Theta _0 ^0=-\frac{m\left[ 2m + 3r + 3\left( m + r
\right)\cos2\theta\right]}{r^4}\barh^2 +O(\barh^4),\\
R_1 ^1 =&\Theta _1 ^1=\frac{m\left[-14m + 3r + \left( -11m + r
\right)\cos2\theta\right]}{r^4} \barh^2+ O(\barh^4),\\
R_1 ^2 =&\Theta _1 ^2=\frac{2m\cos^2\theta \cot\theta}{r^4} \barh^2+
O(\barh^4),\\
R_1 ^3 =&-\Theta _1 ^3=\frac{2m\cot\theta}{r^4}\barh +O(\barh^3),\\
R_2 ^1 =&\Theta _2 ^1=\frac{5m\left( -2m + r \right)\sin
2\theta}{r^3} \barh^2+ O(\barh^4),\\
R_2 ^2 =&\Theta _2 ^2=\frac{m\left[4\left(m + r \right)  + \left( 6m
+ 5r \right)\cos2\theta \right]}{r^4}\barh^2 + O(\barh^4),\\
R_2 ^3 =&-\Theta _2 ^3=\frac{4m}{r^3}\barh +O(\barh^3),\\
R_3 ^1 =&-\Theta _3 ^1=\frac{m\left( 2m - r
\right)\sin2\theta}{r^3}\barh+O(\barh^3).\\
R_3 ^2 =&-\Theta _3 ^2=\frac{4m \cos^2\theta}{r^3}\barh+O(\barh^3).\\
R_3 ^3 =&\Theta _3 ^3=\frac{m\left[-8m +8r+\left( -6m + 9r \right)
\cos2\theta \right]}{r^4}\barh^2 + O(\barh^4).
\end{aligned}
\end{eqnarray*}

Note that $R_i^i=\Theta_i^i$ for all $i$, and $R_i^j=-\Theta_i^j$ if
$i\ne j$. Let us write
\begin{eqnarray}\label{expansion}
\begin{aligned}
R_j^i= {R_j^i}_{(0)} + \barh {R_j^i}_{(1)} + \barh^2
{R_j^i}_{(2)}+\dots, \\
\Theta_j^i= {\Theta_j^i}_{(0)} + \barh {\Theta_j^i}_{(1)} + \barh^2
{\Theta_j^i}_{(2)}+\dots. \end{aligned}
\end{eqnarray}
Then the formulae for $R_j^i$ and $\Theta_j^i$ show that
\[
{R_j^i}_{(0)}={\Theta_j^i}_{(0)}, \quad
{R_j^i}_{(1)}=-{\Theta_j^i}_{(1)}, \quad {R_j^i}_{(2)}
={\Theta_j^i}_{(2)}.
\]

Naively generalizing the Einstein tensor $R^i _j-
\frac{1}{2}\delta^i_j R$ to the noncommutative setting, one ends up
with a quantity that does not vanish at order $\barh$, as can be
easily shown using the above results. However,
\[ R^i _j +\Theta ^i _j
-\delta ^i _j R =0 + O(\barh^2).
\]
This indicates that the proposed noncommutative Einstein equation
\eqref{Einstein} captures some essence of the underlying symmetries
in the noncommutative world.

Now the deformed Schwarzschild metric (\ref{deformed-Sch}) satisfies
the vacuum noncommutative Einstein equation (\ref{Einstein}) with
$T^i _j=0$ and $\Lambda=0$ to first order in the deformation
parameter. However, if we take into account higher order corrections
in $\barh$, the deformed Schwarzschild metric no longer satisfies
the noncommutative Einstein equation in the vacuum. Instead, $R^i _j
+\Theta ^i _j -\delta ^i _j R = T^i_j$ with $T^i_j$ being of order
$O(\barh^2)$ and given by
\begin{eqnarray}\label{T}
\begin{aligned}
T_0^1=&T_0^2=T_0^3=T_1^0=T_2^0=T_3^0=T_3^1=T_3^2=T_1^3=T_2^3=0,\\
T_0^0=&\frac{m\left[ 8m - 9r + \left( 4m - 9r
\right)\cos2\theta\right]}{r^4}\barh^2 + O(\barh^4),\\
T_1^1=&\frac{-m\left[ 4m + 3r + \left( 4m + 5r
\right)\cos2\theta\right]}{r^4}\barh^2 + O(\barh^4),\\
T_1^2=&\frac{2m \cos^2 \theta \cot\theta }{r^4}\barh^2 +O(\barh^4),\\
T_2^1=&\frac{5m\left( -2m + r \right) \sin2\theta}{r^3}\barh^2 + O(\barh^4),\\
T_2^2=&\frac{m\left[ 14m - 2r + \left( 13m - r
\right)\cos2\theta\right]}{r^4}\barh^2 + O(\barh^4),\\
T_3^3=&\frac{m\left[2\left(m + r \right)  + \left( m + 3r
\right)\cos2\theta\right]}{r^4}\barh^2 +O(\barh^4).
\end{aligned}
\end{eqnarray}

A possible physical interpretation of the results is the following.
We regard the $\barh$ and higher order terms in the metric $g_{i j}$
and associated curvature $R_{i j k l}$ as arising from quantum
effects of gravity. Then the $T_i^j$ obtained in \eqref{T} should be
interpreted as quantum corrections to the classical energy-momentum tensor.

\subsection{Quantum deformation of the Schwarzschild-de
Sitter spacetime}\label{schwarzschild-dS}

In this section, we investigate a noncommutative analogue of the
Schwarzschild-de Sitter spacetime. Since the analysis is parallel to
that on the quantum Schwarzschild spacetime, we shall only present
the pertinent results.

Recall that the Schwarzschild-de Sitter spacetime has the following
metric
\begin{eqnarray}\label{Sch-dS}
\begin{aligned}
ds^2=&-\left(1-\frac{r^2}{l^2}-\frac{2m}{r}\right)dt^2
+\left(1-\frac{r^2}{l^2}-\frac{2m}{r}\right)^{-1}dr^2\\
&+r^2\left(d\theta ^2 +\sin^2 \theta d\phi ^2 \right)
\end{aligned}
\end{eqnarray}
where $\frac{3}{l^2}=\Lambda>0$ is the cosmological constant, and
$m$ is related to the total mass of the spacetime through the same
formula as in the Schwarzschild case. This spacetime can be embedded
into a flat space of 6-dimensions in two different ways.

\medskip
\noindent(i). {Generalized Kasner embedding:}
\begin{eqnarray*}
\begin{aligned}
&X^1=\left(1-\frac{r^2}{l^2}-\frac{2m}{r}\right)^\frac{1}{2} \sin t, \quad
X^2=\left(1-\frac{r^2}{l^2}-\frac{2m}{r}\right)^\frac{1}{2} \cos t,\\
&X^3=f(r),\quad
(f')^2+1=\left(1-\frac{r^2}{l^2}-\frac{2m}{r}\right)^{-1}
\left[1+\left(\frac{m}{r^2}-\frac{r}{l^2}\right)^2\right],\\
&X^4=r \sin\theta \cos\phi,\quad X^5=r \sin\theta \sin\phi,\quad
X^6=r \cos\theta,
\end{aligned}
\end{eqnarray*}
with the Schwarzschild-de Sitter metric given by
\begin{eqnarray*}
ds^2=-\big(dX^1\big)^2-\big(dX^2\big)^2+\big(dX^3\big)^2+\big(dX^4\big)^2
+\big(dX^5\big)^2+\big(dX^6\big)^2
\end{eqnarray*}

\medskip
\noindent(ii). {Generalized Fronsdal embedding:}
\begin{eqnarray*}
\begin{aligned}
&Y^1=\left(1-\frac{r^2}{l^2}-\frac{2m}{r}\right)^\frac{1}{2} \sinh
t, \quad
Y^2=\left(1-\frac{r^2}{l^2}-\frac{2m}{r}\right)^\frac{1}{2} \cosh t,\\
&Y^3=f(r),\quad (f')^2
+1=\left(1-\frac{r^2}{l^2}-\frac{2m}{r}\right)^{-1}
\left[1-\left(\frac{m}{r^2}-\frac{r}{l^2}\right)^2\right],\\
&Y^4=r \sin\theta \cos\phi, \quad Y^5=r \sin\theta \sin\phi, \quad
Y^6=r \cos\theta,
\end{aligned}
\end{eqnarray*}
with the Schwarzschild-de Sitter metric given by
\begin{eqnarray*}
ds^2=-\big(dY^1\big)^2+\big(dY^2\big)^2+\big(dY^3\big)^2+\big(dY^4\big)^2
+\big(dY^5\big)^2+\big(dY^6\big)^2.
\end{eqnarray*}

Let us now construct a noncommutative analogue of the
Schwarzschild-de Sitter spacetime. Denote $x^0=t$, $x^1=r$,
$x^2=\theta$ and $x^3=\phi$. We deform the algebra of functions in
these variables by imposing on it the Moyal product defined by
\eqref{multiplication} with the anti-symmetric matrix
(\ref{asy-matrix}).  Denote the resultant noncommutative algebra by
$\cA$.

Now we regard the functions $X^i$, $Y^i$  $(1\le i\le 6)$ appearing
in both the generalized Kasner embedding  and the generalized
Fronsdal embedding as elements of $\cA$. Let $E_\mu^i$ ($\mu=0, 1,
2, 3$, and $i=1, 2, \dots, 6$) be defined by \eqref{veibein} but for
the generalized Kasner and  Fronsdal embeddings respectively. We
also define the metric and noncommutative torsion for the
noncommutative Schwarzschild-de Sitter spacetime by equations
\eqref{Kasner-case} and \eqref{Fronsdal-case} for the generalized
Kasner and Fronsdal embeddings respectively. As in the case of the
noncommutative Schwarzschild spacetime, we can show that the metrics
and the noncommutative torsions are respectively equal for the two
embeddings. We record the metric $\g=(g_{\mu \nu})$ of the quantum
deformation of the Schwarzschild-de Sitter spacetime below:
\begin{eqnarray}\label{deformed-Sch-dS}
\begin{aligned}
g_{0 0} =&-\left(1-\frac{r^2}{l^2}-\frac{2m}{r}\right),\\
g_{0 1} =&g_{1 0}=g_{0 2} =g_{2 0}=g_{0 3} =g_{3 0}=0, \\
g_{1 1}=&\left(1-\frac{r^2}{l^2}-\frac{2m}{r}\right)^{-1}
+\left(\sin^2\theta -\cos^2 \theta\right)\sinh^2\barh,\\
g_{1 2} =    &g_{2 1} = 2r\sin\theta\cos\theta\sinh^2\barh,\\
g_{1 3} =    &-g_{3 1} = -2r\sin\theta\cos\theta\sinh\barh\cosh\barh,\\
g_{2 2} =    &r^2
    \left[1-\left(\sin^2\theta-\cos^2\theta\right)\sinh^2\barh\right],\\
g_{2 3} = &-g_{3 2}
=r^2\left(\sin^2\theta-\cos^2\theta\right)\sinh\barh\cosh\barh,\\
{g}_{3 3} =&r^2
    \left[\sin^2\theta+\left(\sin^2\theta-\cos^2\theta\right)\sinh^2\barh\right].
\end{aligned}
\end{eqnarray}

Let us now consider the Ricci and $\Theta$ curvatures
of the deformed Schwarzschild metric. We have
\begin{eqnarray*}\label{RTheta-dS}
\begin{aligned}
R_0^1=&R_0^2=R_0^3=R_1^0=R_2^0=R_3^0=0,\\
\Theta_0^1=&\Theta_0^2=\Theta_0^3=\Theta_1^0=\Theta_2^0=\Theta_3^0=0,\\
R_0^0=&\Theta_0^0=\frac{3}{l^2} + \Big[ l^2 \left( 10m - 3r
\right)r^3 + 10r^6
- l^4 m \left( 2m + 3r \right)\\
&-3\left\{ -2 l^2 m r^3 - 2 r^6 + l^4 m
\left( m + r \right)  \right\} \cos 2\theta \Big] \frac{\barh^2}{l^4 r^4}
+O(\barh^4),\\
R_1^1=&\Theta_1^1=\frac{3}{l^2} + \Big[ l^2 \left( 16 m - 9r \right)
r^3 + 16
r^6 + l^4 m \left( -14 m + 3 r \right) \\
& + \left\{ 2 l^2 \left( 5 m - 2 r \right) r^3 + 10 r^6 + l^4 m
\left( -11 m + r \right) \right\} \cos 2\theta \Big]
\frac{\barh^2}{l^4 r^4} +O(\barh^4),\\
R_1^2=&\Theta_1^2=\frac{2 \left( l^2 m - 4 r^3 \right)
\cos^2 \theta \cot \theta}{l^2 r^4} \barh^2 +O(\barh^4),\\
R_1^3=&-\Theta_1^3=\frac{2\, \left(l^2 m - 4r^3 \right)
\cot\theta}{l^2 r^4}\barh +O(\barh^3),\\
R_2^1=&\Theta_2^1=-\frac{\left[ l^2 \left( 2m - r \right)  + r^3
\right]\left(5 l^2 m + 4 r^3 \right) \sin 2\theta}{l^4 r^3}\barh^2 +O(\barh^4),\\
\end{aligned}
\end{eqnarray*}
\begin{eqnarray*}
\begin{aligned}
R_2^2=&\Theta_2^2=\frac{3}{l^2} + \Big[ l^2 \left( 22 m - r \right)
r^3 + 10
r^6 + 4 l^4 m \left(m + r \right) \\
& + \left\{ 6r^6 + l^2 r^3 \left(15 m + 4 r \right)  + l^4 m \left(
6m + 5r \right)  \right\}\cos 2\theta \Big] \frac{\barh^2}{l^4 r^4} +O(\barh^4),\\
R_2^3=&-\Theta_2^3=\left( \frac{8}{l^2} + \frac{4m}{r^3} \right) \barh +O(\barh^3),\\
R_3^1=&-\Theta_3^1=\frac{\left( l^2 m - 4r^3 \right) \left[l^2 \left( 2m - r \right)
                   + r^3 \right] \sin 2\theta}{l^4 r^3} \barh+O(\barh^3),\\
R_3^2=&-\Theta_3^2=4 \left( \frac{2}{l^2} + \frac{m}{r^3} \right) \,
\cos^2\theta \barh +O(\barh^3),\\
R_3^3=&\Theta_3^3=\frac{3}{l^2} + \Big[ -8 l^4 m \left( m - r
\right) +l^2 \left( 28 m - 5r \right)r^3 + 16 r^6 \\
              &+ 3\left\{ 7 l^2 m r^3 + 4 r^6 + l^4 m
              \left( -2 m + 3r \right)  \right\} \cos 2\theta \Big]
              \frac{\barh^2}{l^4 r^4}
              +O(\barh^4).
\end{aligned}
\end{eqnarray*}
Note that if we expand $R^j_i$ and $\Theta^j_i$ into power series in
$\barh$ in the form \eqref{expansion}, we again have
\[
{R_j^i}_{(0)}={\Theta_j^i}_{(0)}, \quad
{R_j^i}_{(1)}=-{\Theta_j^i}_{(1)}, \quad {R_j^i}_{(2)}
={\Theta_j^i}_{(2)}.
\]

By using the above results one can easily show that the deformed
Schwarzschild-de Sitter metric (\ref{deformed-Sch-dS}) satisfies the
vacuum noncommutative Einstein equation (\ref{Einstein}) (with $T^i
_j=0$) to first order in the deformation parameter:

\[ R^i _j +\Theta ^i _j
-\delta ^i _j R +\delta^i _j \frac{6}{l^2}=0 + O(\barh^2).
\]

Further analysing the deformed Schwarzschild-de Sitter metric, we
note that $R^i _j +\Theta ^i _j -\delta ^i _j R +\delta^i _j
\frac{6}{l^2}=T^i_j$ with $T^i_j$ being of order $O(\barh^2)$ and
given by
\begin{eqnarray*}
\begin{aligned}
T_0^1=&T_0^2=T_0^3=T_1^0=T_2^0=T_3^0=T_3^1=T_3^2=T_1^3=T_2^3=0,\\
T_0^0=&- \Big[ 2 l^2 \left( 14m-3r \right)r^3 + 16 r^6 + l^4 m
\left( -8m + 9r \right) \\
& + \left\{ 20 l^2 m r^3 + 11 r^6 + l^4 m \left( -4m + 9r \right)
\right\}\cos2\theta \Big]\frac{\barh^2}{l^4 r^4} +O(\barh^4),\\
T_1^1=&- \Big[ 22 l^2 m r^3 + 10 r^6 + l^4 m \left(4 m + 3r
\right)\\&+ \left\{7r^6 + 4 l^2 r^3 \left( 4 m + r \right)  + l^4 m
\left( 4 m+ 5 r \right)  \right\} \cos 2\theta \Big]
\frac{\barh^2}{l^4 r^4}+O(\barh^4),\\
T_1^2=&\frac{2 \left( l^2 m - 4 r^3 \right) \cos^2\theta \cot\theta}
{l^2 r^4}\barh^2 +O(\barh^4),\\
\end{aligned}
\end{eqnarray*}
\begin{eqnarray*}
\begin{aligned}
T_2^1=&-\frac{\left[ l^2 \left( 2m - r \right)  + r^3 \right]\left(
5 l^2m + 4 r^3 \right) \sin2\theta}{l^4 r^3} \barh^2+O(\barh^4),\\
T_2^2=&-\Big[2 \left\{ 4 l^2 \left( 2m - r \right) r^3 + 8 r^6 + l^4
m \left( -7m + r \right)  \right\} \\
& + \left\{l^2 \left(11m - 4r \right)r^3 + 11 r^6 + l^4 m \left( -13
m + r \right)\right\}\cos 2\theta \Big] \frac{\barh^2}{l^4 r^4} +O(\barh^4),\\
T_3^3=& \Big[2 \left\{ -5 r^6 + l^4 m \left( m + r \right) + l^2 r^3
\left( -5m + 2r \right)  \right\} \\
& + \left\{ -5 l^2 m r^3 - 5 r^6 + l^4 m \left( m + 3r \right)
\right\} \cos 2\theta \Big] \frac{\barh^2}{l^4 r^4}+O(\barh^4).
\end{aligned}
\end{eqnarray*}
Similar to the case of the quantum Schwarzschild spacetime, one may
regard this as quantum corrections to the energy-momentum tensor.

\subsection{Noncommutative gravitational collapse}\label{Collapse}

Gravitational collapse is one of the most dramatic phenomena in the
universe. When the pressure is not sufficient to balance the
gravitational attraction inside a star, the star undergoes sudden
gravitational collapse possibly accompanied by a supernova
explosion, reducing to a super dense object such as a neutron star
or black hole.

In 1939, Oppenheimer and Snyder \cite{OS} investigated the collapse
process of ideal spherically symmetric stars equipped with the
Tolman metric \cite{T}. When the energy-momentum of an ideal star is
assumed to be given by perfect fluids, Tolman's metric allows the
case of dust which has zero pressure. In the dust case, Oppenheimer
and Snyder solved the Einstein field equations by further assuming
that the energy density is constant. They showed that stars above
the Tolman-Oppenheimer-Volkoff mass limit \cite{OS} (approximately
three solar masses) would collapse into black holes for reasons
given by Chandrasekhar. The work of Oppenheimer and Snyder also
marked the beginning of the modern theory of black holes.

The Tolman metric studied in \cite{OS} can be written as
\begin{eqnarray}
ds^2=-dt^2+(1-c t)^{4/3} \big[dr^2+r^2 (d\theta ^2+ \sin ^2\theta
d\phi ^2)\big] \label{Oppenhm-S}
\end{eqnarray}
with $c=3 r_0 ^\frac{1}{2} R _b ^{-\frac{3}{2}}$, where $r_0$ is the
gravitational radius and $R_b$ is the radius of the star. One may
examine the behaviour of the scalar curvature as time increases.
When time approaches the value $1/c$, the scalar curvature goes to
$\infty$, thus the radius of the stellar object reduces to zero. By
the reasoning of \cite{W}, this indicates gravitational collapse.
Obviously this only provides a snapshot, nevertheless, it enables
one to gain some understanding of gravitational collapse.

In this section, we quantise the dust solutions \cite{OS} and study
noncommutative gravitational collapse. Our method for quantisation
is much the same as in previous sections.

The Tolman spacetime can be embedded into a 5-dimensional flat
Minkowski spacetime via
\begin{eqnarray}\label{five-X}
\begin{aligned}
X^1=&\frac{9 (1-c t)^{4/3}}{32 c^2}+\Big(\frac{r^2}{4}+1\Big) (1-c t)^{2/3},\\
X^2=&\frac{9 (1-c t)^{4/3}}{32 c^2}+\Big(\frac{r^2}{4}-1\Big) (1-c t)^{2/3},\\
X^3=&(1-c t)^{2/3} r \cos \phi  \sin \theta ,\quad
X^4=(1-c t)^{2/3} r \sin \theta  \sin \phi ,\\
X^5=&(1-c t)^{2/3} r \cos \theta .
\end{aligned}
\end{eqnarray}

We deform the algebra of functions in the variables $r, t, \pi$ and
$\theta$ into a Moyal algebra ${\mathcal A}$ defined by the
anti-symmetric matrix
\begin{eqnarray}\label{ansatz}
\left(\theta^{\mu \nu}\right)_{\mu,
\nu=0}^3=\left(\begin{array}{cccc}
   0&  0&  0&  0\\
   0&  0&  0&  0\\
   0&  0&  0&  1\\
   0&  0&  -1&  0
\end{array}\right).
\end{eqnarray}
Now we consider the noncommutative geometry embedded in ${\mathcal
A}^5$ by \eqref{five-X}. The noncommutative metric of the embedded
noncommutative geometry (defined in the standard way \cite{CTZZ})
yields a quantum deformation of the metric (\ref{Oppenhm-S}):
\begin{eqnarray}
\begin{aligned}
g_{\mu \nu} =& -\partial_\mu X^1\ast \partial_\nu X^1 +\partial_\mu
X^2\ast \partial_\nu X^2
+\partial_\mu X^3\ast \partial_\nu X^3  \\
&+ \partial_\mu X^4\ast \partial_\nu X^4 +\partial_\mu X^5\ast
\partial_\nu X^5,
\end{aligned}
\end{eqnarray}
which can be computed explicitly. We have
\begin{eqnarray*}
\begin{aligned}
g_{11}=&-\frac{4 c^2 r^2 \cos  2 \theta  \sinh ^2\bar h}{9 (1-c t)^{2/3}}-1,\\
g_{12}=&g_{21}=\frac{2}{3} c r (1-c t)^{1/3}\cos  2 \theta  \sinh ^2\bar h,\\
g_{13}=&g_{31}=-\frac{4}{3} c r^2 (1-c t)^{1/3} \cos  \theta  \sin  \theta  \sinh ^2\bar h,\\
g_{14}=&-g_{41}=\frac{1}{3} c r^2 (1-c t)^{1/3} \sin  2 \theta
\sinh  2 \bar h,\\
g_{22}=&(1-c t)^{4/3} \left(1-\cos  2 \theta  \sinh ^2\bar h\right),\\
g_{23}=&g_{32}=r (1-c t)^{4/3} \sin  2 \theta  \sinh ^2\bar h,\\
g_{24}=&-g_{42}=-2 r (1-c t)^{4/3} \cos  \theta  \cosh \bar h \sin  \theta  \sinh \bar h,\\
g_{33}=&r^2 (1-c t)^{4/3} \left(\cos  2 \theta  \sinh ^2\bar h+1\right),\\
\end{aligned}
\end{eqnarray*}
\begin{eqnarray*}
\begin{aligned}
g_{34}=&-g_{43}=-\frac{1}{2} r^2 (1-c t)^{4/3} \cos  2 \theta  \sinh  2 \bar h,\\
g_{44}=&-\frac{1}{2} r^2 (1-c t)^{4/3} (\cos  2 \theta  \cosh  2
\bar h-1).
\end{aligned}
\end{eqnarray*}
The noncommutative scalar curvature is given by
\begin{eqnarray}\label{Cs}
R=\frac{4 c^2 \cosh ^2\bar h}{(1-c t)^{4/3}}\frac{C_1}{C^3}
\end{eqnarray}
where $C$ and $C_1$ are the following functions
\begin{eqnarray*}
\begin{aligned}
C=&9(1-c t)^{2/3} \cosh ^4\bar h-2 c^2 r^2 (2 \cos  2 \theta +\cosh
2 \bar h+3) \sinh ^2\bar h,
\end{aligned}
\end{eqnarray*}
\begin{eqnarray*}
\begin{aligned}
C_1 =&-243 (1-c t)^{4/3} \cosh ^8\bar h+486 (1-c t)^{4/3} \cosh ^6\bar h\\
   &-18 c^2 r^2 (1-c t)^{2/3} (2 \cos  2 \theta -3 \cosh  2 \bar h-1) \sinh ^2\bar h \cosh^4\bar h\\
   &-9 c^2 r^2 (1-c t)^{2/3} \Big(52 \cosh  2 \bar h+3 \cosh  4 \bar h\\
   &+\cos  2 \theta  (28 \cosh  2 \bar h+\cosh  4 \bar h-13)+9\Big) \sinh ^2\bar h \cosh ^2\bar h\\
   &+4 c^4 r^4  \sinh^4\bar h \Big(4 \cos  2 \theta  (\cosh  2 \bar h+15) \sinh ^2\bar h\\
   &+2 \cos  4 \theta  (\cosh  2 \bar h-3)+38 \cosh  2 \bar h+3 \cosh  4 \bar h-13\Big).
\end{aligned}
\end{eqnarray*}

Let us regard $\bar h$ as a real number and make the (physically
realistic) assumption that $\bar h$ is positive but close to zero.
Now if $t$ is significantly smaller than $\frac{1}{c}$ compared to
$\bar h$, that is, $\frac{1}{c}-t\gg\bar h$, both the noncommutative
metric and noncommutative scalar curvature $R$ are finite, and there
is non-singularity in the noncommutative spacetime. Thus the stellar
object described by the noncommutative geometry behaves much the
same as the corresponding classical object.

When $t=t_*:=\frac{1}{c}$, we have $R | _ {t_{*}
=\frac{1}{c}}=\infty$ and the radius of the stellar object reduces
to zero. This is the time when gravitational collapse happens in the
usual classical setting.

However, in the noncommutative case, singularities of the scalar
curvature already appear before $t_*$. Indeed, when time reaches
\[
\begin{aligned}
t(r, \theta)&=\frac{1}{c} - \frac{\sqrt{8}}{27} c^2 r^3
(2 \cos  2 \theta +\cosh  2 \bar h+3)^{3/2} \frac{\sinh^3\bar h}{\cosh^6\bar h}\\
&\cong \frac{1}{c} - \frac{8}{27} c^2 r^3 (\cos  2 \theta + 2)^{3/2}
\bar h^3,
\end{aligned}
\]
$C$ vanishes and $\frac{C_1}{(1-c t)^{4/3}}$ is finite of order $0$
in $\bar h$. Thus the scalar curvature tends to infinity for all
$t(r, \theta)$ and the noncommutative spacetime becomes singular.
Therefore, gravitational collapse happens within a certain range of
time because of the quantum effects captured by the noncommutativity
of spacetime. However, effect of noncommutativity only starts to
appear at third order of $\bar h$.

\bigskip

\end{document}